\documentclass[11pt, draftcls, onecolumn]{IEEEtran}
\usepackage{subfigure}
\usepackage{setspace}
\usepackage{amsmath}
\usepackage{amssymb}
\usepackage{amsfonts}
\usepackage{amscd}
\usepackage{mathdots}
\usepackage{mathrsfs}
\usepackage[final]{graphicx}
\usepackage{graphicx}
\usepackage{psfrag}
\usepackage{epsfig}
\usepackage{color}
\usepackage{url}
\usepackage{textcomp}
\usepackage{multirow}
%%%%%%%%%%%%%%%%%%%%%%%%%%%%%%%%%%%%%%%%%%%%%%%%%%%%%%%%%%%%%%%%%%%%%%%
\usepackage{threeparttable}
%%%%%%%%%%%%%%%%%%%%%%%%%%%%%%%%%%%%%%%%%%%%%%%%%%%%%%%%%%%%%%%%%%%%%%%%
\input{epsf.sty}
%%%%%%%%%%%%%%%%
%\usepackage{verbatim}
%\usepackage{psfig}
%\usepackage{subfigure}
%\usepackage[centertags]{amsmath}
%%%%%%%%%%%%%%%%
\newtheorem{theorem}{Theorem}
\newtheorem{lemma}{Lemma}
\newtheorem{definition}{Definition}

%%%%%%%%%%%%%%
\begin{document}

% paper title
\title{A Low ML-decoding Complexity, Full-diversity, Full-rate MIMO Precoder}
\author{
\authorblockN{K. Pavan Srinath and B. Sundar Rajan,\\}
\authorblockA{Dept of ECE, Indian Institute of science, \\
Bangalore 560012, India\\
Email:\{pavan,bsrajan\}@ece.iisc.ernet.in\\
}
}
% make the title area
\maketitle
\vspace{-15mm}
\begin{abstract}
Precoding for multiple-input, multiple-output (MIMO) antenna systems is considered with perfect channel knowledge available at both the transmitter and the receiver. For 2 transmit antennas and QAM constellations, an approximately optimal (with respect to the minimum Euclidean distance between points in the received signal space) real-valued precoder based on the singular value decomposition (SVD) of the channel is proposed, and it is shown to offer a maximum-likelihood (ML)-decoding complexity of $\mathcal{O}(\sqrt{M})$ for square $M$-QAM. The proposed precoder is obtainable easily for arbitrary QAM constellations, unlike the known complex-valued optimal precoder by Collin et al. for 2 transmit antennas, which is in existence for $4$-QAM alone with an ML-decoding complexity of $\mathcal{O}(M\sqrt{M})$ ($M=4$) and is extremely hard to obtain for larger QAM constellations. The proposed precoder's loss in error performance for 4-QAM in comparison with the complex-valued optimal precoder is only marginal. Our precoding scheme is extended to higher number of transmit antennas on the lines of the E-$d_{min}$ precoder for $4$-QAM by Vrigneau et al. which is an extension of the complex-valued optimal precoder for $4$-QAM. Compared with the recently proposed $X-$ and $Y-$precoders, the error performance of our precoder is significantly better. It is shown that our precoder provides full-diversity for QAM constellations and this is supported by simulation plots of the word error probability for $2\times2$, $4\times4$ and $8\times8$ systems.
\end{abstract}

\begin{keywords}
Diversity gain, low ML-decoding complexity, MIMO precoders, singular values, word error probability.
\end{keywords}

\section{Introduction and Background}\label{sec_intro}
Multiple-input, multiple-output (MIMO) antenna systems have evoked a lot of research interest primarily because of the enhanced capacity they provide, compared with that provided by the single antenna point to point channel. Moreover, for a system with $n_t$ transmit antennas and $n_r$ receive antennas ($n_t \times n_r$ system), the maximum {\it diversity gain} (refer Section \ref{sec_model} for a definition of diversity gain) achievable with coherent detection has been shown to be $n_tn_r$. For MIMO systems with the channel state information available only at the receiver (CSIR), suitably designed space-time block codes (STBCs) \cite{seshadri} provide full-diversity. Full-rate transmission is said to occur if $n_{min} = \min(n_t,n_r)$ independent information symbols are transmitted in every channel use. Full-rate STBCs achieving full-diversity have also been proposed \cite{SRS}, \cite{ORBV}. However, all full-rate, full-diversity STBCs are characterized by a high ML-decoding complexity (refer Section \ref{sec_model} for a formal definition of ML-decoding complexity). In general, decoding full-rate STBCs requires jointly decoding $n_tn_{min}$ symbols. 

MIMO systems with full channel state information at the transmitter (CSIT) or partial CSIT have been extensively studied in literature. From an information-theoretic perspective, capacity is an important parameter for MIMO systems and waterfilling \cite{cover} can be employed to achieve the capacity with a Gaussian codebook. From a signal processing point of view, the error performance of MIMO systems using finite constellations is one of the important parameters, and several precoding\footnote[1]{precoding is also referred to as ``transmit beamforming''.} schemes have been proposed in this regard. Maximal ratio transmission was introduced in \cite{mrt} to achieve full-diversity while maximizing the signal-to-noise ratio (SNR) by precoding at the transmitter and equalizing at the receiver for transmission of a single symbol per channel use. Subsequently, the use of precoding and equalizing matrices at the transmitter and the receiver, respectively, was proposed in \cite{max_snr} to maximize the SNR at the receiver, but this scheme resulted in low-rate transmission. Several works on optimal linear precoders and decoders have been done for the {\it minimum mean square error} (MMSE) criterion \cite{gen_linear}-\cite{perez}. Since these precoders are linear and optimal for the MMSE decoding, the decoding complexity is very low and full-diversity is also achieved, but the error performance is worse than that for the ML-decoding. Other non-ML-decoding techniques include lattice-reduction based techniques \cite{lattice_reduction} which provide full-rate transmission with possibly full-diversity, but lattice-reduction itself involves a high complexity for large MIMO systems. Extensive research has also been done on MIMO systems with limited feedback to the transmitter about the channel from the receiver (see, for example, \cite{multimode} and references therein). In this paper, we consider MIMO systems with full CSIT. The channel state information could be either sent to the transmitter by the receiver (when there are separate frequency bands for uplink and downlink transmission) or the transmitter could estimate the channel, if it is reciprocal (like in a time division duplexing (TDD) system), by receiving pilot signals from the receiver. In literature, to the best of our knowledge, there is no known precoding technique to achieve all the three attributes - full-rate, full-diversity and low ML-decoding complexity (``low ML-decoding complexity'' is a relative term and in this paper, it is used to mean the joint decoding of at most 2 complex symbols).

 Almost all the popular precoding techniques with ML-decoding at the receiver use the singular value decomposition (SVD) of the MIMO channel \cite{rayleigh}. The E-$d_{min}$ precoder for $4$-QAM \cite{edmin_precoder}, an extension of the complex-valued optimal\footnote[2]{Throughout this paper, unless otherwise stated, optimalily is with respect to the minimum Euclidean distance between points in the received signal space.} precoder \cite{optimal_precoder} to higher number of transmit antennas, has been shown to perform very well for $4$-QAM, beating all other linear precoding and decoding schemes based on the MMSE criterion, and ML-decoding involves jointly decoding two complex symbols only. However, this precoder exists in literature for 4-QAM alone and is very hard to obtain for larger QAM constellations, since it involves a numerical search over 3 parameters. Recently, $X$- and $Y$- precoders have been proposed in \cite{xy_codes} as rivals for the E-$d_{min}$ precoder. The $X$-precoder has been shown to offer an ML-decoding complexity of $\mathcal{O}(M)$ (this can be brought down to $\mathcal{O}(\sqrt{M})$ by the same decoding scheme as for our precoder, which is explained in Subsection \ref{subsec_compexity}), while the $Y$-precoder has an ML-decoding complexity which is invariant with respect to the constellation size $M$. The disadvantage with the $X$-precoder is that it loses out to the E-$d_{min}$ precoder in error performance for $4$-QAM and it is not known if an explicit expression for the precoding matrix can be obtained for larger QAM constellations. The $Y$-precoder (which uses a two-dimensional constellation), although explicitly obtainable for constellations of any size $M$, loses out in error performance to the E-$d_{min}$ precoder, since it has not been optimized for error performance. In literature, all the aforementioned low ML-decoding complexity precoders have been claimed to offer a diversity gain of $(n_t -n_{min}/2 +1)(n_r -n_{min}/2 +1)$ by the authors (but the simulation results in this paper indicate that the E-$d_{min}$ precoder has full-diversity for $4$-QAM). Concerned by the limitations of each of the low ML-decoding complexity precoders, we first propose a real-valued, approximately optimal precoder (we explain in Section \ref{sec_proposed} why the precoder is ``approximately optimal'') based on the SVD of the channel for $n_t = 2$ and then extend it to higher number of transmit antennas, an approach similar to that in \cite{edmin_precoder}. The ML-decoding complexity offered by our precoder is shown to be $\mathcal{O}(\sqrt{M})$ for $M$-QAM. For $4$-QAM, the proposed precoder has only a marginally poorer error performance than the E-$d_{min}$ precoder, but has lower ML-decoding complexity. For larger QAM constellations, it is easily obtainable, unlike the E-$d_{min}$ precoder. When compared with the $X$- and $Y$-precoders, it has a much better error performance. The main contributions of the paper are - 
\begin{enumerate}
 \item we propose a novel scheme to obtain an SVD-based, real-valued, approximately optimal precoder for $2$ transmit antennas and any $M$-QAM. The method of obtaining this precoder is different from the one taken to obtain the complex-valued optimal precoder for 2 transmit antennas \cite{optimal_precoder}, and is easily applicable for any $M$-QAM, unlike that in \cite{optimal_precoder}.
\item We extend this real-valued precoder to higher number of transmit antennas and show that our precoding scheme offers full-diversity with ML-decoding. This is a new result as the existing low ML-decoding complexity precoders have been claimed to offer a diversity gain of only $(n_t -n_{min}/2 +1)(n_r -n_{min}/2 +1)$. The simulation plots of the word error probability for $2 \times 2$, $4\times4$ and $8\times8$ systems support our claims about full-diversity.
\item The ML-decoding complexity of the proposed precoder is shown to be $\mathcal{O}(\sqrt{M})$ for square $M$-QAM, in general. However, for a considerable number of channel realizations, no search is required over the $M$ signal points. Specifically for $4$-QAM and 2 transmit antennas, simulations reveal that for more than $50\%$ of the channel realizations, no search is needed over any of the signal points. This aspect is elaborated in Subsection \ref{subsec_compexity}.
\end{enumerate}

The rest of the paper is organized as follows. Section \ref{sec_model} gives the system model, the relevant definitions and some known results which are needed for our precoder design. A brief review of existing low ML-decoding complexity precoders is given in Section \ref{sec_low} The method to obtain the proposed precoder is presented in Section \ref{sec_proposed} and its ML-decoding complexity is analyzed in Subsection \ref{subsec_compexity}. In Section \ref{sec_extension}, we show how this precoding scheme can be extended to higher number of transmit antennas while Section \ref{sec_div_gain} deals with the achievable diversity gain with the proposed precoder. Simulation results are given in Secion \ref{sec_simulations} and concluding remarks constitute Section \ref{sec_discussion}.
 
\textit{Notations}: Throughout, bold, lowercase letters are used to denote vectors and bold, uppercase letters are used to denote matrices. For a complex matrix $\textbf{X}$, the Hermitian, the transpose and the Frobenius norm of $\textbf{X}$ are denoted by $\textbf{X}^{H}$, $\textbf{X}^{T}$ and $\Vert \textbf{X} \Vert$, respectively. The $i^{th}$ element of a vector $\textbf{x}$ is denoted by $[\textbf{x}]_i$, the $(i,j)^{th}$ entry of $\textbf{X}$ is denoted by $\textbf{X}(i,j)$, $tr(\textbf{X})$ denotes the trace of $\textbf{X}$, and $\textbf{X} =\operatorname{diag}(x_1,x_2,\cdots,x_n)$ implies that $\textbf{X}$ is a diagonal matrix with $x_1,x_2,\cdots,x_n$ as the diagonal entries. The set of all real numbers, complex numbers and integers are denoted by $\mathbb{R}$, $\mathbb{C}$ and $\mathbb{Z}$, respectively. The real and the imaginary part of a complex-valued vector $\textbf{x}$ are denoted by $\textbf{x}_I$ and $\textbf{x}_Q$, respectively, $\vert x\vert$ denotes the absolute value of a complex number $x$ and $\vert \mathcal{S}\vert$ denotes the cardinality of the set $\mathcal{S}$. The $T\times T$ identity matrix and the $n \times m$ sized null matrix are denoted by $\textbf{I}_T$ and $\textbf{O}_{n \times m}$, respectively. For a complex random variable $X$, $\mathbb{E}[X]$ denotes the expectation of $X$, while $X \sim \mathcal{N}_{\mathbb{C}}\left(0,1\right)$ implies that $X$ has the complex normal distribution with zero mean and unit variance. Unless used as a subscript or to denote indices, $j$ represents $\sqrt{-1}$ and for a function $f(x)$, $\underset{x}{\operatorname{argmin}}f(x)$ and $\underset{x}{\operatorname{argmax}}f(x)$ denote that value of $x$ which minimizes and maximizes $f(x)$, respectively. For any real number $m$, $\lfloor m\rfloor$ denotes the largest integer smaller than $m$, $\lceil m\rceil$ denotes the smallest integer larger than $m$, $\operatorname{rnd}[m]$ denotes the operation that rounds off $m$ to the nearest integer and $\operatorname{sgn}(m)$ gives the sign of $m$, both of which can be expressed as 
\begin{equation*}
\operatorname{rnd}[m] = \left\{ \begin{array}{ll}
\lfloor m \rfloor,  & \textrm{if} ~ \lceil m \rceil - m > m - \lfloor m \rfloor\\
\lceil m \rceil,  & \textrm{otherwise}\\
\end{array} \right., ~~~~~~~~
\operatorname{sgn}(m) = \left\{ \begin{array}{rl}
1,  & \textrm{if} ~ m \geq 0\\
-1, & \textrm{otherwise}.\\
\end{array} \right.
\end{equation*}
The Gamma function and the Q-function of $x$ are denoted by $\Gamma(x)$ and $Q(x)$, respectively, and given as 
\begin{eqnarray*}
 \Gamma(x)  =  \int_{0}^{\infty}e^{-t}t^{x-1}dt,~~~~~~
Q(x)  =  \int_{x}^{\infty}\frac{1}{\sqrt{2\pi}}e^{-\frac{t^2}{2}}dt.
\end{eqnarray*}

Let $f(x)$ and $g(x)$ be two functions. Then, $f(x)= \mathcal{O}\left(g\left(x\right)\right)$ if and only if there exists a positive constant $c < \infty$ such that 
\begin{equation*}
 \lim_{x \to \infty}\frac{f(x)}{g(x)} = c,
\end{equation*}

\noindent and $f(x) = o\left(g(x)\right)$ as $x \to a$ if and only if
\begin{equation*}
 \lim_{x \to a}\frac{f(x)}{g(x)} = 0.
\end{equation*}
\noindent For a real variable $t$, the unit step function $u(t)$ is defined as $u(t) = 1$, if $t >0$, and $u(t) =0$, if $t < 0$.
 
\section{System Model}\label{sec_model}
We consider an $n_t \times n_r$ MIMO system with full CSIT and CSIR. The channel is assumed to be quasi-static and flat with Rayleigh fading. The channel is modelled as 
\begin{equation}\label{channel_model}
\textbf{y} = \sqrt{\frac{SNR}{n_t}}\textbf{Hs} + \textbf{n},
\end{equation}
where $\textbf{y} \in \mathbb{C}^{n_r\times 1}$ is the received vector, $\textbf{H} \in \mathbb{C}^{n_r\times n_t}$ is the channel matrix, $\textbf{s} \in \mathbb{C}^{n_t\times 1}$ is the precoded symbol vector and $\textbf{n} \in \mathbb{C}^{n_r\times 1}$ is the noise vector. The entries of $\textbf{H}$ and $\textbf{n}$ are i.i.d. circularly symmetric complex Gaussian random variables with zero mean and variance 0.5 per real dimension. In \eqref{channel_model}, the scalar $SNR$ is the average SNR at each receive antenna, and $\textbf{s}$ is constrained such that $\mathbb{E}[tr(\textbf{s}\textbf{s}^H)] = n_t$. The precoded symbol vector $\textbf{s}$ can be defined as 
\begin{equation*}
 \textbf{s} \triangleq \frac{1}{\sqrt{E}}\textbf{Mx},
\end{equation*}
where $\textbf{M} \in \mathbb{C}^{n_t \times n_{min}}$ is the precoding matrix, with $\Vert \textbf{M} \Vert^2 = n_t$, and $\textbf{x} \triangleq [x_1,x_2,$ $\cdots, x_{n_{min}}]^T$  is the symbol vector, with its entries taking values independently from a signal constellation denoted by $\mathcal{A}$, having an average energy of $E$ units. The rate of transmission is $n_{min}$ independent symbols per channel use. Note that in this model, the variable scalar which defines the average SNR at each receive antenna is $SNR$, while $E$ is a constant. For example, for a standard $M$-QAM, with $M = 2^{2a}$ for some positive integer $a$, $E = 2(M-1)/3$.

Let $\textbf{H} = \textbf{UDV}^H$, obtained on the SVD of $\textbf{H}$, with $\textbf{U} \in \mathbb{C}^{n_r \times n_r}$ and $\textbf{V} \in \mathbb{C}^{n_t \times n_t}$ being unitary matrices.  $\textbf{D} \in \mathbb{R}^{n_r \times n_t}$ is such that $\textbf{D} = [\textbf{D}_1 ~~ \textbf{O}_{n_r \times (n_t-n_r)}]$ if $n_t \geq n_r$ and $\textbf{D} = [\textbf{D}_1 ~~ \textbf{O}_{n_t \times (n_r-n_t)}]^T$ if $n_t < n_r$, where $\textbf{D}_1 \in \mathbb{R}^{n_{min} \times n_{min}}$ is a diagonal matrix given by $\textbf{D}_1 = \operatorname{diag}(\sigma_1,\sigma_2,\cdots, \sigma_{n_{min}})$, with $\sigma_1,\sigma_2$, $\cdots$, $\sigma_{n_{min}}$ being the non-zero singular values of $\textbf{H}$, placed in the descending order on the diagonal. Let the precoding matrix $\textbf{M}$ be given as
\begin{equation}\label{precoder}
 \textbf{M} = \textbf{VP},
\end{equation}
where $\textbf{P} \in \mathbb{C}^{n_t \times n_{min}}$. Now, \eqref{channel_model} can be written as 
\begin{equation}\label{revised_channel_model}
 \textbf{y}^\prime = \sqrt{\frac{SNR}{n_tE}}\textbf{DPx} + \textbf{n}^\prime,
\end{equation}
where $\textbf{y}^\prime = \textbf{U}^H\textbf{y}$ and $\textbf{n}^\prime = \textbf{U}^H\textbf{n}$, with the distribution of $\textbf{n}^\prime$ being the same as that of $\textbf{n}$.

The ML-decoding rule seeks to find that $\check{\textbf{x}} \in \mathcal{A}^{n_{min} \times 1}$ which minimizes the metric given by
\begin{equation}\label{ml_metric}
 m(\textbf{x}) =  \left \Vert \textbf{y}^\prime - \sqrt{\frac{SNR}{n_tE}}\textbf{DPx} \right \Vert^2.
\end{equation}
Clearly, the error performance of the system depends on the choice of $\textbf{P}$ and $\mathcal{A}$. From \eqref{precoder}, it is evident that the design of the precoding matrix $\textbf{M}$ amounts to designing $\textbf{P}$. Henceforth in this paper, $\textbf{P}$ is referred to as {\it precoder} and the constellation is  assumed to be an $M$-QAM, where $M = 2^{2a}$ for some positive integer $a$.

\begin{definition} ({\it Full-diversity precoder})
In a MIMO system, if at a high SNR, the average probability $P_e$ that a transmitted symbol vector is wrongly decoded is given by
\begin{equation*}
 P_e \approx (G_c.SNR)^{-G_d},
\end{equation*}
where $\approx$ stands for ``is approximately equal to'', then, $G_d$ and $G_c$ are called the {\it diversity gain} (or diversity order) and the {\it coding gain} of the system, respectively. For a MIMO system with precoding, if $G_d = n_tn_r$, then, we call the precoder a full-diversity precoder.
\end{definition}

\begin{definition}\label{def1}({\it ML-Decoding complexity})
The ML decoding complexity is measured in terms of the number of computations involved in minimizing the ML-decoding metric given in \eqref{ml_metric} and is a function of the constellation size $M$. If at most $k$ symbols are required to be jointly decoded, the ML-decoding complexity is said to be $\mathcal{O}(M^{k})$. 
\end{definition}

Note that the above definition of the ML-decoding complexity is with respect to the {\it worst-case} ML-decoding complexity. The use of a sphere decoder \cite{viterbo} can effectively result in a much lower {\it average} ML-decoding complexity that depends on the dimension of the sphere decoder and not on the constellation size \cite{hassibi}. For a complex lattice constellation of size $M$, if the ML-decoding complexity is $\mathcal{O}\left(M^k\right)$, the dimension of the real-valued sphere decoder to be used would be $2k$. As a result, a precoding scheme with higher worst-case ML-decoding complexity than another precoding scheme will also have higher average ML-decoding complexity. Hence, throughout this paper, we consider only the worst case ML-decoding complexity.

We make use of the following known results, which are needed for our purpose. 
\begin{theorem}\label{lemma1}
 \cite{giannakis} For a scalar channel modelled by $y = \sqrt{SNR}\beta x + n$, where $n \sim \mathcal{N}_{\mathbb{C}}\left(0,1\right)$, $\mathbb{E}[\vert x \vert^2] = 1$ and $\alpha = \vert \beta \vert^2$ is a nonnegative random variable whose probability density function (PDF) $f_\alpha(\alpha)$ is such that 
\begin{equation*}
 f_\alpha(\alpha) = c\alpha^t + o(\alpha^{t}),  ~~~ \textrm{as}~\alpha \to 0^+, 
\end{equation*}
the average symbol error probability (SEP) $P_e$, which is given by  
\begin{equation*}
P_e  =  \mathbb{E}[P_{e,\alpha}] =  \int_{0}^{\infty}Q\left(\sqrt{k\alpha SNR} \right)f_\alpha d\alpha,
\end{equation*}
is such that as $SNR \to \infty$,
\begin{equation*}
 P_e = \frac{2^tc\Gamma(t+\frac{3}{2})}{\sqrt{\pi}(t+1)}(k.SNR)^{-(t+1)} + o\left(SNR^{-(t+1)}\right),
\end{equation*}

\noindent where $k$ is a fixed positive constant depending on the constellation, $c$ is another constant defining the marginal PDF of $\alpha$ and $P_{e,\alpha} = Q\left(\sqrt{k\alpha SNR} \right) $ is the $\alpha$ dependent instantaneous SEP. If $\mathbb{E}[\alpha] = 1$, then, $SNR$ is the average SNR at the receiver and the diversity gain $G_d$ and the coding gain $G_c$ can be defined as
\begin{eqnarray*}
G_d = t+1, ~~~
G_c  =  k\left(\frac{2^tc\Gamma(t+\frac{3}{2})}{\sqrt{\pi}(t+1)}\right)^{-\frac{1}{t+1}}.
\end{eqnarray*}
\end{theorem}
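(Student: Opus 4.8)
The plan is to exploit the fact that, at high $SNR$, the integral defining $P_e$ concentrates near $\alpha = 0$: for any fixed $\alpha > 0$ the factor $Q(\sqrt{k\alpha\,SNR})$ decays super-polynomially in $SNR$, so only a shrinking neighborhood of the origin, of width $\mathcal{O}(1/SNR)$, contributes to leading order. This is precisely why the exponent $t$ governing $f_\alpha(\alpha)$ near the origin, rather than its global shape, determines the diversity gain. Concretely, I would write $P_e = L + R$, where $L = \int_0^\infty Q(\sqrt{k\alpha\,SNR})\,c\alpha^t\,d\alpha$ is the contribution of the leading local term and $R = \int_0^\infty Q(\sqrt{k\alpha\,SNR})\bigl(f_\alpha(\alpha)-c\alpha^t\bigr)\,d\alpha$ collects the $o(\alpha^t)$ remainder, and then evaluate $L$ exactly while showing $R = o(SNR^{-(t+1)})$.

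For $L$, the change of variables $u = \alpha\,SNR$ gives $L = c\,SNR^{-(t+1)}\int_0^\infty Q(\sqrt{ku})\,u^t\,du$, which exposes the $SNR^{-(t+1)}$ scaling immediately. The remaining constant integral I would compute by integration by parts, using $\frac{d}{du}Q(\sqrt{ku}) = -\frac{1}{\sqrt{2\pi}}e^{-ku/2}\frac{\sqrt{k}}{2\sqrt{u}}$; the boundary terms vanish since $u^{t+1}$ kills the lower limit (as $t>-1$) and $Q$ decays exponentially at the upper limit. This reduces the integral to $\frac{\sqrt{k}}{2(t+1)\sqrt{2\pi}}\int_0^\infty u^{t+1/2}e^{-ku/2}\,du$, and the rescaling $s = ku/2$ identifies the last integral as the Gamma integral $(2/k)^{t+3/2}\Gamma(t+\tfrac{3}{2})$. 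Collecting constants yields $\int_0^\infty Q(\sqrt{ku})u^t\,du = \frac{2^t\Gamma(t+3/2)}{\sqrt{\pi}(t+1)}k^{-(t+1)}$, so that $L = \frac{2^t c\,\Gamma(t+3/2)}{\sqrt{\pi}(t+1)}(k\,SNR)^{-(t+1)}$, exactly the claimed leading term.

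The main obstacle is the rigorous control of $R$, i.e.\ showing that the $o(\alpha^t)$ part of $f_\alpha$ contributes only $o(SNR^{-(t+1)})$; this is essentially a Watson's lemma / Abelian type argument. I would fix an arbitrary $\epsilon > 0$, choose $\delta > 0$ so that $\lvert f_\alpha(\alpha)-c\alpha^t\rvert \le \epsilon\,\alpha^t$ for $0 < \alpha \le \delta$, and split the integral for $R$ as $\int_0^\delta + \int_\delta^\infty$. On $[0,\delta]$ the local bound and the same change of variables give $\bigl\lvert\int_0^\delta\bigr\rvert \le \epsilon\int_0^\infty Q(\sqrt{k\alpha\,SNR})\alpha^t\,d\alpha = \epsilon\,\frac{2^t\Gamma(t+3/2)}{\sqrt{\pi}(t+1)}(k\,SNR)^{-(t+1)}$, i.e.\ a bounded multiple of $\epsilon$ times the leading order. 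On $[\delta,\infty)$, monotonicity of $Q$ together with $\int f_\alpha = 1$ and the estimate $Q(\sqrt{k\alpha\,SNR}) \le \tfrac{1}{2}e^{-k\alpha\,SNR/2}$ bounds both the $f_\alpha$ piece and the $c\alpha^t$ piece by quantities decaying exponentially in $SNR$, hence $o(SNR^{-(t+1)})$. Letting $SNR\to\infty$ and then $\epsilon\to 0$ forces $\limsup_{SNR\to\infty} R\,SNR^{t+1} = 0$, establishing $R = o(SNR^{-(t+1)})$.

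Finally, with $P_e = \frac{2^t c\,\Gamma(t+3/2)}{\sqrt{\pi}(t+1)}(k\,SNR)^{-(t+1)} + o(SNR^{-(t+1)})$ in hand, the gains follow by matching against the defining form $P_e \approx (G_c\,SNR)^{-G_d}$. The exponent of $SNR$ gives $G_d = t+1$, and equating the multiplicative constants, $G_c^{-(t+1)} = \frac{2^t c\,\Gamma(t+3/2)}{\sqrt{\pi}(t+1)}k^{-(t+1)}$, which upon solving yields $G_c = k\left(\frac{2^t c\,\Gamma(t+3/2)}{\sqrt{\pi}(t+1)}\right)^{-\frac{1}{t+1}}$, exactly the stated expressions.
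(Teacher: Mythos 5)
The paper does not prove this statement at all: it is quoted as a known result from the cited reference of Wang and Giannakis, so there is no in-paper proof to compare against. Your blind proof is correct and self-contained, and it follows essentially the standard argument behind that reference: split $P_e = L + R$ with $L$ the contribution of the leading local term $c\alpha^t$, evaluate $L$ exactly, and control $R$ by an Abelian/Watson-type argument. Your evaluation of the constant checks out: integration by parts with $\frac{d}{du}Q(\sqrt{ku}) = -\frac{\sqrt{k}}{2\sqrt{2\pi u}}e^{-ku/2}$ and the substitution $s = ku/2$ give
\begin{equation*}
\int_0^\infty Q\left(\sqrt{ku}\right)u^t\,du \;=\; \frac{\sqrt{k}}{2(t+1)\sqrt{2\pi}}\left(\frac{2}{k}\right)^{t+\frac{3}{2}}\Gamma\left(t+\tfrac{3}{2}\right) \;=\; \frac{2^t\,\Gamma\left(t+\frac{3}{2}\right)}{\sqrt{\pi}\,(t+1)}\,k^{-(t+1)},
\end{equation*}
which matches the stated leading coefficient, and your remainder control is sound: the $\epsilon$--$\delta$ bound near the origin yields at most $\epsilon$ times the leading order, while on $[\delta,\infty)$ the bound $Q(\sqrt{k\alpha\,SNR}) \leq \frac{1}{2}e^{-k\alpha\,SNR/2}$ together with $\int f_\alpha = 1$ and the finiteness of the Gamma-type tail integral makes both pieces exponentially small in $SNR$, hence $o(SNR^{-(t+1)})$. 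Two minor points worth making explicit: the vanishing of the lower boundary term and the convergence of all your integrals require $t > -1$, which is implicit in the theorem and automatic in every use in this paper (there $t = n_tn_r - 1 \geq 0$); and the final identification of $G_d$ and $G_c$ is, as you treat it, purely a matter of matching the asymptotic expression to the defining form $P_e \approx (G_c\,SNR)^{-G_d}$, not an additional analytic claim. No gaps.
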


Given that $\sigma_i$, $i=1,2,\cdots,n_{min}$, are the non-zero singular values of $\textbf{H}$, it is known that $\sigma_i^2$ are the non-zero eigenvalues of $\textbf{HH}^H$, which are denoted in the descending order by $\lambda_i$, $i = 1, 2, \cdots, n_{min}$. The following theorem gives the expression for the first order expansion of the marginal PDF of $\lambda_i$ as $\lambda_i \to 0^+$. 
\begin{theorem}\label{lemma2}
\cite{ordonez} Let the entries of the $n_r \times n_t$ matrix $\textbf{H}$ be i.i.d. complex Gaussian with zero mean and unit variance. The first order expansion of the marginal PDF of the $k^{th}$ largest eigenvalue $\lambda_k$ of the complex central Wishart matrix $\textbf{HH}^H$ is given by $f_{\lambda_k}(\lambda_k) = a_k\lambda_k^{d_k} + o\left(\lambda_k^{d_k}\right)$, as $\lambda_k \to 0^+$, $k=1,2,\cdots,n_{min}$, with $d_k = (n_t - k + 1)(n_r-k+1)-1$ and $a_k$ being positive constants.
\end{theorem}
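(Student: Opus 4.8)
The plan is to work directly from the exact joint density of the ordered eigenvalues of the complex central Wishart matrix and to extract the leading small-argument behaviour by a scaling argument. Write $n_{min} = \min(n_t,n_r)$ and $n_{max} = \max(n_t,n_r)$. The ordered nonzero eigenvalues $\lambda_1 > \lambda_2 > \cdots > \lambda_{n_{min}} > 0$ of $\mathbf{H}\mathbf{H}^H$ admit the well-known joint density
\begin{equation*}
f(\lambda_1,\dots,\lambda_{n_{min}}) = K \prod_{1\le i<j\le n_{min}} (\lambda_i-\lambda_j)^2 \prod_{i=1}^{n_{min}} \lambda_i^{\,n_{max}-n_{min}} e^{-\lambda_i},
\end{equation*}
valid on $\lambda_1>\cdots>\lambda_{n_{min}}>0$, where $K>0$ is a normalisation constant depending only on $(n_t,n_r)$. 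The marginal density $f_{\lambda_k}$ is obtained by fixing $\lambda_k = x$ and integrating the remaining variables over the ordered region; this splits the other eigenvalues into the $k-1$ ``large'' ones $\lambda_1,\dots,\lambda_{k-1}$, constrained to exceed $x$, and the $\ell-1$ ``small'' ones $\lambda_{k+1},\dots,\lambda_{n_{min}}$, constrained to lie in $(0,x)$, where $\ell \triangleq n_{min}-k+1$.

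First I would isolate the source of the vanishing as $x\to 0^+$. Since $\lambda_{k+1},\dots,\lambda_{n_{min}}$ are all forced below $x$, I would rescale them as $\lambda_{k+i}=x\mu_i$ with $1>\mu_1>\cdots>\mu_{\ell-1}>0$, while the large eigenvalues range over $(x,\infty)\to(0,\infty)$ and contribute only an $O(1)$ factor in the limit. Collecting the powers of $x$ from the three sources that scale --- the squared Vandermonde restricted to the $\ell$ small eigenvalues $\lambda_k,\dots,\lambda_{n_{min}}$ (which gives $x^{\ell(\ell-1)}$, there being $\binom{\ell}{2}$ such pairs), the factors $\lambda_i^{\,n_{max}-n_{min}}$ attached to the $\ell$ small eigenvalues (which gives $x^{\ell(n_{max}-n_{min})}$), and the Jacobian $d\lambda_{k+1}\cdots d\lambda_{n_{min}} = x^{\ell-1}\,d\mu_1\cdots d\mu_{\ell-1}$ --- yields an overall factor $x^{\ell(\ell-1)+\ell(n_{max}-n_{min})+(\ell-1)}$. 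A short simplification shows this exponent equals $\ell^2 -1 + \ell(n_{max}-n_{min}) = (n_{min}-k+1)(n_{max}-k+1)-1 = (n_t-k+1)(n_r-k+1)-1 = d_k$, which is exactly the claimed power.

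The residual factor multiplying $x^{d_k}$ is, in the limit $x\to 0^+$, the product of $K$, the positive integral of the large-eigenvalue factors over $\lambda_1>\cdots>\lambda_{k-1}>0$ (each cross term $(\lambda_i-\lambda_j)^2$ with $i<k\le j$ tending to $\lambda_i^2$, so that these pairs together contribute $\prod_{i<k}\lambda_i^{2\ell}$ to the integrand), and the positive integral of the $\mu$-dependent factors over the simplex $1>\mu_1>\cdots>\mu_{\ell-1}>0$, times the combinatorial count of admissible orderings. As each of these is a finite, strictly positive quantity, their product defines the constant $a_k>0$.

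The main obstacle is the analytic justification rather than the bookkeeping: I must show that the large-eigenvalue integral converges to its $x=0$ value and that the rescaled small-eigenvalue integral converges, so that passing to the limit --- by dominated convergence, using the integrable $e^{-\lambda_i}$ tails for the large block and boundedness on the simplex for the small block --- is legitimate and produces precisely the coefficient $a_k$, with the error genuinely $o(x^{d_k})$. Care is also needed to confirm that no competing contribution of lower order in $x$ can arise; this is guaranteed exactly because the ordering forces all of $\lambda_{k+1},\dots,\lambda_{n_{min}}$ to be $O(x)$, leaving the scaling above as the unique dominant regime.
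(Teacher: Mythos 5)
The paper offers no proof of this statement to compare against: it is imported verbatim from reference \cite{ordonez} and used as a black box, so your derivation stands or falls on its own. It stands. The joint ordered-eigenvalue density you start from is the correct one, and your power count is exact: with $\ell = n_{min}-k+1$, the $\binom{\ell}{2}$ Vandermonde pairs inside the small block $\{\lambda_k,\dots,\lambda_{n_{min}}\}$ give $x^{\ell(\ell-1)}$, the factors $\lambda_i^{\,n_{max}-n_{min}}$ over that block (correctly including $\lambda_k = x$ itself) give $x^{\ell(n_{max}-n_{min})}$, and the Jacobian of $\lambda_{k+i}=x\mu_i$ gives $x^{\ell-1}$, so the exponent is $\ell(\ell+n_{max}-n_{min})-1 = (n_{min}-k+1)(n_{max}-k+1)-1 = d_k$; your treatment of the cross terms, with $(\lambda_i - x\mu_j)^2 \to \lambda_i^2$ so each large eigenvalue picks up $\lambda_i^{2\ell}$, is also right, and dominated convergence against the $e^{-\lambda_i}$ tails (with the constraint $\lambda_{k-1}>x$ absorbed via indicators) plus boundedness on the $\mu$-simplex legitimately yields $f_{\lambda_k}(x) = x^{d_k}g(x)$ with $g(0^+) = a_k > 0$, hence the claimed expansion with a genuinely $o(x^{d_k})$ remainder. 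Two cosmetic remarks: since you integrate the density on the ordered chamber, no ``combinatorial count of admissible orderings'' enters (that factor is $1$, and including it as stated would be wrong if it were anything else); and it is worth observing that the boundary cases $k=1$ (empty large block) and $k=n_{min}$ ($\ell=1$, no rescaled variables, exponent $n_{max}-n_{min}$) are handled degenerately by the same formula, confirming consistency with $d_1 = n_tn_r-1$ and $d_{n_{min}} = n_{max}-n_{min}$. This is in all likelihood close in spirit to how the cited reference obtains the expansion, but as a self-contained argument it is complete and correct.
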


In \eqref{revised_channel_model}, if $\textbf{P} = \textbf{I}_{n_t}$ or $\textbf{P} = [ \textbf{I}_{n_r} ~~ \textbf{O}_{n_r \times n_t-n_r}]^T$, depending on whether $n_{min} = n_t$ or $n_{min} = n_r$, respectively, each of the symbols $x_i$, $i=1,2,\cdots,n_{min}$ will experience a diversity gain given by $G_{d_i} = (n_t - i + 1)(n_r-i+1)$. This is evident from Theorem \ref{lemma1} and Theorem \ref{lemma2}. The above operation of premultiplying the symbol vector by $\textbf{VP}$, with $\textbf{P} = \textbf{I}_{n_t}$ (for $n_t < n_r$) or $\textbf{P} = [ \textbf{I}_{n_r} ~~ \textbf{O}_{n_r \times n_t-n_r}]^T$ (for $n_t \geq n_r$) can be viewed to result in $n_{min}$ {\it virtual subchannels}. So, the overall diversity gain for the symbol vector is $\min\{ G_{d_i}, i=1,2,\cdots,n_{min} \} = (n_{max} - n_{min} +1)$, where $n_{max} = \max(n_t,n_r)$. This is the least diversity order one can obtain in a precoded MIMO system with ML-decoding. However, assuming that the symbols take values from an arbitrary signal constellation of size $M$, the ML-decoding complexity is $\mathcal{O}(M)$, since each symbol can be decoded independently from the others.

Let $\Delta \textbf{x} \triangleq \textbf{x}-\textbf{x}^\prime$, where $\textbf{x},\textbf{x}^\prime \in \mathcal{A}^{n_{min} \times 1}$. 
\begin{theorem}\label{lemma3}\cite{multiple_beamforming}
For $\textbf{P}$ such that $[\textbf{P}\Delta \textbf{x} ]_1 \neq 0$ for any non-zero value of $\Delta \textbf{x} \in \{\textbf{x-x}^\prime | \textbf{x,x}^\prime \in \mathcal{A}^{n_{min}\times 1} \}$, the diversity gain of the system is $n_tn_r$.
\end{theorem}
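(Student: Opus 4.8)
The plan is to carry out a standard pairwise-error-probability (PEP) and union-bound argument, but with the crucial twist that the hypothesis on $\textbf{P}$ lets me reduce every error event to a scalar channel governed by the \emph{largest} eigenvalue $\lambda_1$. Conditioned on $\textbf{H}$, the ML metric \eqref{ml_metric} shows that the probability of deciding $\textbf{x}^\prime$ when $\textbf{x}$ is transmitted is $Q\!\left(\sqrt{\tfrac{SNR}{2n_tE}}\,\Vert \textbf{DP}\Delta\textbf{x}\Vert^2}\right)$ up to constants. The first thing I would do is exploit the block structure of $\textbf{D}$: since $\textbf{D}_1 = \operatorname{diag}(\sigma_1,\dots,\sigma_{n_{min}})$ scales the $i^{th}$ coordinate of $\textbf{P}\Delta\textbf{x}$ by $\sigma_i$ (and zeros out the rest), and $\sigma_i^2=\lambda_i$, one gets the key expansion
\[
\Vert \textbf{DP}\Delta\textbf{x}\Vert^2 \;=\; \sum_{i=1}^{n_{min}} \lambda_i\,\bigl|[\textbf{P}\Delta\textbf{x}]_i\bigr|^2 \;\geq\; \lambda_1\,\bigl|[\textbf{P}\Delta\textbf{x}]_1\bigr|^2 ,
\]
valid in both cases $n_t \geq n_r$ and $n_t < n_r$.

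Next I would use the hypothesis. Because $\lambda_1 \geq \lambda_2 \geq \cdots \geq \lambda_{n_{min}} \geq 0$ and, by assumption, $[\textbf{P}\Delta\textbf{x}]_1 \neq 0$ for \emph{every} nonzero $\Delta\textbf{x}$, the lower bound above retains only the leading term, and monotonicity of $Q(\cdot)$ turns the conditional PEP into a $Q$-function whose argument depends on the single random variable $\lambda_1$. This is exactly the scalar-channel form in Theorem \ref{lemma1} with $\alpha = \lambda_1$, the positive number $\bigl|[\textbf{P}\Delta\textbf{x}]_1\bigr|^2$ being absorbed into the effective constant $k$. By Theorem \ref{lemma2} the marginal PDF of $\lambda_1$ behaves like $a_1\lambda_1^{d_1}$ as $\lambda_1 \to 0^+$ with $d_1 = (n_t-1+1)(n_r-1+1)-1 = n_tn_r-1$, so applying Theorem \ref{lemma1} with $t = d_1$ shows the averaged PEP decays as $SNR^{-(d_1+1)} = SNR^{-n_tn_r}$.

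Finally I would assemble the total error probability. Since $\mathcal{A}$ is finite there are only finitely many ordered pairs $(\textbf{x},\textbf{x}^\prime)$, so the union bound yields $P_e \leq \sum_{(\textbf{x},\textbf{x}^\prime)} \mathrm{PEP}$, a sum with an $SNR$-independent number of terms; setting $\delta = \min_{\Delta\textbf{x}\neq 0}\bigl|[\textbf{P}\Delta\textbf{x}]_1\bigr|^2$, which is a \emph{positive} constant precisely because the hypothesis rules out $[\textbf{P}\Delta\textbf{x}]_1 = 0$, gives a uniform lower bound on each argument. Hence $P_e = \mathcal{O}\!\left(SNR^{-n_tn_r}\right)$, i.e. $G_d \geq n_tn_r$; and since $n_tn_r$ is the maximum diversity of an $n_t\times n_r$ system, $G_d = n_tn_r$.

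The hard part is justifying the reduction to a scalar channel governed solely by $\lambda_1$, and this is exactly where the hypothesis is indispensable: it guarantees that every error event couples to the \emph{strongest} eigenvalue, so the weak eigenvalues $\lambda_i$ $(i>1)$ — which are responsible for the degraded diversity $(n_{max}-n_{min}+1)$ in the $\textbf{P}=\textbf{I}$ case discussed after Theorem \ref{lemma2} — can never govern the decay. A secondary point to check is that the hypotheses of Theorem \ref{lemma1} (first-order behavior of the density at $0^+$, finiteness of the relevant constants) survive the constant rescaling by $\delta$ and the conditioning; this is routine given the finiteness of the constellation.
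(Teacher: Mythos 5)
Your proposal is correct and follows essentially the same route as the paper's own proof: lower-bounding $\Vert\textbf{DP}\Delta\textbf{x}\Vert^2 \geq \lambda_1\bigl|[\textbf{P}\Delta\textbf{x}]_1\bigr|^2$ with $\epsilon_{min}^2 = \min_{\Delta\textbf{x}\neq 0}\bigl|[\textbf{P}\Delta\textbf{x}]_1\bigr|^2 > 0$ by hypothesis, applying the union bound over the finite constellation, and invoking Theorem \ref{lemma1} with Theorem \ref{lemma2} (so $t = n_tn_r - 1$) to get $P_e \leq C\cdot SNR^{-n_tn_r} + o\left(SNR^{-n_tn_r}\right)$. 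Aside from a harmless typo in your conditional PEP (the norm should not be squared outside the square root), the argument matches the paper's, and your closing remark that $n_tn_r$ is also the maximum achievable diversity actually tightens the paper's conclusion from $G_d \geq n_tn_r$ to equality.
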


\begin{proof} The instantaneous probability that a transmitted symbol vector $\textbf{x}$ is falsely decoded to some other vector $\textbf{x}^\prime$ is given by
\begin{equation}\label{false_dec}
Pr\{\textbf{x} \to \textbf{x}^\prime\} = Q\left(\sqrt{\frac{SNR}{2n_tE}} \Vert \textbf{D}\textbf{P}(\textbf{x-x}^\prime)\Vert\right).
\end{equation}
Let $\epsilon_{min} \triangleq \min_{\Delta \textbf{x}} \left\{ \left\vert [\textbf{P}\Delta \textbf{x}]_1 \right\vert \right\}$, with $\Delta \textbf{x} \neq \textbf{O}_{n_{min}\times 1}$. So, the probability $P_e(\textbf{x})$ that a transmitted vector $\textbf{x}$ is falsely decoded is upper bounded as 
\begin{eqnarray}\label{upper_bound}
 P_e(\textbf{x}) & \leq & \left(\vert \mathcal{A} \vert ^{n_{min}}-1\right)Q\left(\sqrt{\frac{SNR}{2n_tE}} \sigma_1\epsilon_{min}\right),
\end{eqnarray}
where $\textbf{D}(1,1) = \sigma_1$, the largest singular value of $\textbf{H}$. Assuming that all the symbol vectors taking values from $\mathcal{A}^{n_{min} \times 1}$ are equally likely to be transmitted, the average instantaneous {\it word error probability} (WEP), dependent on $\textbf{D}$ is given by
\begin{equation}\label{in_wep}
 P_{e,\textbf{D}} =  \frac{1}{\vert \mathcal{A} \vert^{n_{min}}}\sum_{\textbf{x} \in \mathcal{A} ^{n_{min}\times 1}}P_e(\textbf{x}).
\end{equation}
Using \eqref{upper_bound} in \eqref{in_wep}, 
\begin{eqnarray*}
 P_{e,\textbf{D}}  \leq  \left(\vert \mathcal{A} \vert ^{n_{min}}-1\right)Q\left(\sqrt{\frac{SNR}{2n_tE}}  \sigma_1 \epsilon_{min}\right) = \left(\vert \mathcal{A} \vert ^{n_{min}}-1\right)Q\left(\sqrt{ \left(\frac{\epsilon_{min}^2}{2n_tE}\right)\lambda_1SNR}  \right),
\end{eqnarray*}
where $\lambda_1 = \sigma_1^2$. So, from Theorem \ref{lemma1} and Theorem \ref{lemma2}, the average WEP $P_e$ as $SNR \to \infty$ is given by
\begin{equation}\label{div_gain_lattice}
 P_e \leq C.SNR^{-n_tn_r} + o\left(SNR^{-n_tn_r}\right),
\end{equation}
where 
\begin{equation*}
C =  \left(\vert \mathcal{A} \vert ^{n_{min}}-1\right)\frac{a_1(2n_tn_r-1)(2n_tn_r-3)\cdots1}{2n_tn_r}\left( \frac{\epsilon_{min}^2}{2n_tE}\right)^{-n_tn_r},
\end{equation*}
\noindent with $a_1$ being a positive constant such that $f_{\lambda_1}(\lambda_1) = a_1\lambda_1^{n_tn_r-1} + o\left(\lambda_1^{n_tn_r-1}\right)$ as $\lambda_1 \to 0^+$. Note that in obtaining $C$, we have used the fact that $\Gamma(t+1) = t\Gamma(t)$ and $\Gamma(1/2) = \sqrt{\pi}$. Since $\epsilon_{min} > 0$, $C < \infty$ and from \eqref{div_gain_lattice}, the diversity gain achieved by the system is $n_tn_r$.
\end{proof}
An alternative proof of Theorem \ref{lemma3} has been presented in \cite{multiple_beamforming}. Since the steps of our proof are used in Section \ref{sec_div_gain} of this paper, and also for the sake of completeness, we have provided our version of the proof.
 
{\it Note}: The condition that $[\textbf{P}\Delta \textbf{x} ]_1 \neq 0$ for any non-zero value of $\Delta \textbf{x} \in \{\textbf{x-x}^\prime | \textbf{x,x}^\prime \in \mathcal{A}^{n_{min}\times 1} \}$ is only {\it sufficient} to guarantee full-diversity. There might be several precoders which do not satisfy this condition but still give full-diversity. This will be elaborated in Section \ref{sec_div_gain}. Also note that in Theorem \ref{lemma3}, the constraint is only on the first entry of $\textbf{P}\Delta \textbf{x}$. The other entries are allowed to be zeros.

 Obtaining $\textbf{P}$ such that $\epsilon_{min} \neq 0$ is not difficult. Choosing $\textbf{P}$ to be $[\textbf{G} ~ \textbf{O}_{n_r\times (n_t-n_r)} ]^T$  (for $n_t >n_r$) or $\textbf{G}$ (for $n_t \leq n_r$) for QAM constellations, where $\textbf{G} \in \mathbb{R}^{n_{min}\times n_{min}}$ is the rotated $\mathbb{Z}^{n_{min}}$ lattice generator matrix with a non-zero product distance, as presented in \cite{full_div}, ensures that the diversity gain is $n_tn_r$. If $\mathcal{A}$ is a square QAM constellation of size $M$, the ML-decoding complexity is $\mathcal{O}\left(M^{\frac{n_{min}}{2}}\right)$, since all the $n_{min}$ independent symbols are entangled in the decoding metric, but the real part of the symbol vector can be independently decoded from the imaginary part. This is possible because $\textbf{G}$ is real-valued. In \cite{multiple_beamforming}, complex-valued precoders are used to achieve full-diversity and they offer an ML-decoding complexity of $\mathcal{O}\left(M^{n_{min}}\right)$.

\section{Review of Low ML-decoding complexity Precoders}\label{sec_low}
This section gives a brief overview of existing low-complexity precoders. The first precoder is called the E-$d_{min}$ precoder \cite{edmin_precoder}, which is an extension of the MIMO precoder for $n_t = 2$ \cite{optimal_precoder}, developed for $4$-QAM. 

\subsection{E-$d_{min}$ precoder}
The precoder $\textbf{P}$ of size $n_{min} \times n_{min}$ (for $n_t > n_r$, the remaining $n_t-n_r$ rows of $\textbf{P}$ are zeros) has the following structure 

{\footnotesize
\begin{equation}\label{edmin_structure}
 \textbf{P} = \left[\begin{array}{cccccccc}
\textbf{M}_1(1,1) & & & & & & & \textbf{M}_1(1,2) \\
& \textbf{M}_2(1,1) & & & & & \textbf{M}_2(1,2) & \\
& & \ddots & & & \iddots & &  \\
& & & \textbf{M}_{\frac{n_{min}}{2}}(1,1) & \textbf{M}_{\frac{n_{min}}{2}}(1,2) & & & \\
& & & \textbf{M}_{\frac{n_{min}}{2}}(2,1) & \textbf{M}_{\frac{n_{min}}{2}}(2,2) & & & \\
& & \iddots & & & \ddots & &  \\
& \textbf{M}_2(2,1) & & & & & \textbf{M}_2(2,2) & \\
\textbf{M}_1(2,1) & & & & & & & \textbf{M}_1(2,2) \\
\end{array}\right],
\end{equation}
}
where, if $\gamma_i \triangleq \tan^{-1}\left( \frac{\sigma_{n_{min}-i+1}}{\sigma_i}\right)$ is such that $0 < \gamma_i < \gamma_o $, then,
\begin{equation}\label{precoder_single}
 \textbf{M}_i =  \sqrt{\frac{2n_t \tau_i^2}{n_{min}}}\left[ \begin{array}{cc}
                         \sqrt{\frac{3+\sqrt{3}}{6}} &  \sqrt{\frac{3-\sqrt{3}}{6}}e^{j\pi/12} \\
              0 & 0\\
                         \end{array}\right]
\end{equation}
and if $ \gamma_o \leq \gamma_i \leq \pi/4$,
\begin{equation*}
 \textbf{M}_i =  \sqrt{\frac{n_t \tau_i^2}{n_{min}}}\left[ \begin{array}{cc}
                         \cos \psi_i &  0 \\
              0 & \sin \psi_i \\
                         \end{array}\right]\left[ \begin{array}{rr}
                         1 &  e^{j\pi/4} \\
              -1 & e^{j\pi/4}\\
                         \end{array}\right],
\end{equation*}
where, 
\begin{equation*}
\psi_i   =  \tan^{-1}\left(\frac{\sqrt{2}-1}{\cos \gamma_i}\right), ~~~
 \gamma_0 =  \textrm{ tan}^{-1}\left(\sqrt{\frac{3\sqrt{3}-2\sqrt{6}+2\sqrt{2}-3}{3\sqrt{3}-2\sqrt{6}+1}} \right) \approx 0.3016
\end{equation*}
and $\tau_i = \sqrt{\frac{n_{min}}{2}\left(\rho_i^2\delta(\gamma_i)\sum_{j=1}^{n_{min}}\frac{1}{\rho_j^2\delta(\gamma_j)}\right)^{-1}}$, with $\rho_i = \sqrt{\sigma_i^2+\sigma_{n_{min}-i+1}^2}$ and 
\begin{equation*}
 \delta(\gamma_j) =  \left\{\begin{array}{ll}
                  (1-\frac{1}{\sqrt{3}})\cos^2 \gamma_j, & \textrm{if}~ 0 < \gamma_j < \gamma_o\\
                   \frac{(4-2\sqrt{2})\cos^2\gamma_j \sin^2 \gamma_j}{1 +(2-2\sqrt{2})\cos^2 \gamma_j}, & \textrm{otherwise}.\\
                   \end{array}\right.
\end{equation*}
The precoder essentially entangles the virtual subchannels with index $i$ and $n_{min}-i+1$, $i = 1,2,$ $\cdots,n_{min}/2$\footnote[3]{for odd valued $n_{min}$, $\frac{n_{min}}{2}$ is replaced by $\lfloor\frac{n_{min}}{2}\rfloor$ and the $\left(\lfloor\frac{n_{min}}{2}\rfloor + 1\right)^{th}$ subchannel is left unpaired.}. Such a scheme will have an ML-decoding complexity of $\mathcal{O}\left( M\sqrt{M}\right)$. It has been shown that the scheme guarantees a diversity gain equal to $(n_t-\frac{n_{min}}{2}+1)(n_r-\frac{n_{min}}{2}+1)$. Also, the precoder is optimal among precoders based on the SVD of the channel for $n_{min} =2$ and $4$-QAM \cite{optimal_precoder}. 

\subsection{X-precoder}
The $X$-precoder has the same structure as in \eqref{edmin_structure}, with the matrices $\textbf{M}_i$ given as 
\begin{equation*}
 \textbf{M}_i = \sqrt{\frac{n_t}{n_{min}}}\left[\begin{array}{rr}
                        \cos\theta_i & -\sin\theta_i\\
\sin \theta_i & \cos\theta_i \\
                       \end{array}\right],
\end{equation*}
where, for $4$-QAM,
\begin{equation*}
 \theta_i = \left\{\begin{array}{ll}
                 \pi/4,  & \textrm{if} ~ \gamma_i \geq \pi/3 \\
                 \tan^{-1}\left(\frac{1-\tan^2\gamma_i - \sqrt{1+\tan^4\gamma_i-3\tan^2\gamma_i}}{\tan^2\gamma_i}\right)  & \textrm{otherwise}.\\
                   \end{array}\right.
\end{equation*}
This scheme has also been shown to guarantee a diversity gain equal to 
$(n_t-\frac{n_{min}}{2}+1)(n_r-\frac{n_{min}}{2}+1)$, but has an ML-decoding complexity of $\mathcal{O}\left( \sqrt{M}\right)$ only (refer Subsection \ref{subsec_compexity} for details). However, it is expected to lose out in performance for $4$-QAM when compared with the E-$d_{min}$ precoder, since it is not optimal. Also, an explicit expression for the precoder when $M >4$ does not exist.

\subsection{Y-precoder}
The $Y$-precoder \cite{xy_codes} has the $Y$-structure but it uses a displacement vector and its precoded symbol vector $\textbf{s}$ can be written as 
\begin{equation}\label{ystructure}
 {\bf s} = {\bf V(Px + u)},
\end{equation}
where, $\textbf{u}$ is the displacement vector. The precoded vector can also be expressed as 
\begin{equation*}
 {\bf s}  = \textbf{VP}_{eff}\textbf{x}_{eff} ,
\end{equation*}
where, $\textbf{Px}+\textbf{u} = \textbf{P}_{eff}\textbf{x}_{eff}$, with $\textbf{P}_{eff}$ and $\textbf{x}_{eff}$ being the effective precoder and the effective symbol vector, respectively. These are defined as 
\begin{equation*}
 \textbf{P}_{eff} = \operatorname{diag}\left(a_1,a_2,\cdots,a_{\frac{n_{min}}{2}},b_{\frac{n_{min}}{2}},\cdots,b_2,b_1\right)
\end{equation*}
where,
\begin{equation*}
 (a_i,b_i) = \left\{\begin{array}{ll}
                 \left(\sqrt{\frac{3n_t}{n_r\left(M^2-1\right)}},0 \right),  & \textrm{if} ~ \beta_i^2 \geq \frac{M^2-1}{3} \\
                 \left(\sqrt{\frac{n_t}{3n_r\left( \beta_i^2+M^\prime\right)}}, \beta_i\sqrt{\frac{n_t}{n_r\left( \beta_i^2 + M^\prime \right)}} \right)  & \textrm{otherwise},\\
                   \end{array}\right.
\end{equation*}
and $ M^\prime = \frac{M^2-1}{9}$ and $\beta_i = \frac{\sigma_i}{\sigma_{n_{min}-i+1}}$. The constellation $\mathcal{A} \in \mathbb{Z}^{2\times1}$ of size $M$ is two-dimensional with 
the signal vectors (not to be confused with the symbol vector ${\bf x}_{eff}$) $\textbf{z}_l$, $l=1,2,\cdots, M$ defined as
\begin{equation*}
\textbf{z}_l  =  \left[\begin{array}{c}
 2l-M-1 \\
(-1)^l\\
\end{array}\right]
\end{equation*}
and the symbol vector that is associated with the $(i,n_{min}-i+1)$ subchannel pairing is $ \textbf{s}_i \triangleq \textbf{v}_i + j \textbf{v}_{n_{min}-i+1}$, with  $\textbf{v}_i, \textbf{v}_{n_{min}-i+1} \in \mathcal{A}$, $i= 1,2,\cdots,n_{min}/2 $. Hence,
\begin{equation*}
 \textbf{x}_{eff} = \left[ [\textbf{s}_1]_1, [\textbf{s}_2]_1, \cdots, [\textbf{s}_{\frac{n_{min}}{2}}]_1, [\textbf{s}_{\frac{n_{min}}{2}}]_2, \cdots, [\textbf{s}_2]_2,[\textbf{s}_1]_2  \right]^T.
\end{equation*}
So, the effective precoder of the $Y$-precoder is a diagonal matrix, while $\textbf{P}$, as given in \eqref{ystructure}, has the '$Y$' structure. The $Y$-precoder has been shown to have better error performance than the $X$-precoder for ``ill-conditioned'' channels, i.e., for low values of $\frac{\sigma_{n_{min}-i +1}}{\sigma_i}$, $i=1,2,\cdots,n_{min}/2$, while for well-conditioned channels, the $X$-precoder has better error performance. However, the $Y$-precoder has lower ML-decoding complexity, which is $\mathcal{O}(1)$. Hence, among all existing precoders, the $Y$-precoder has the least ML-decoding complexity while the E-$d_{min}$ precoder has the best performance for $4$-QAM.

\section{SVD-based, Approximately Optimal, Real-valued Precoder for $n_t =2$}\label{sec_proposed}
In this section, we propose a real-valued precoder for 2 transmit antennas and QAM constellations. The precoder is approximately optimal among the SVD based real-valued precoders for QAM constellations. The primary advantage of this precoder over the complex-valued optimal precoder \cite{optimal_precoder} is that it is much easier to find the entries of the precoder for larger constellations, since it has only 2 parameters that need to be searched for, while the complex-valued precoder has 3 parameters. Without loss of generality, we consider 2 receive antennas and 2 transmit antennas, for which $\textbf{D}$ in \eqref{revised_channel_model} can be expressed as 
\begin{equation*}
\textbf{D} =  \rho\left[\begin{array}{cc}
  \cos \gamma & 0\\
  0    &  \sin \gamma \\
\end{array}\right],
\end{equation*}
where $ \rho = \sqrt{\sigma_1^2+\sigma_2^2}$ and $ \gamma = \tan^{-1} (\frac{\sigma_2}{\sigma_1})$. Clearly, $0 < \gamma \leq \pi/4$. Let
\begin{equation}\label{e_min}
 E_{min}(\textbf{P}) \triangleq \min_{\Delta \textbf{x} \neq \textbf{O}_{2\times1}} \left\{ \Vert \textbf{DP}\Delta \textbf{x} \Vert^2 ,  ~~\Delta \textbf{x} \in \{\textbf{x} -\textbf{x}^\prime ~|~ \textbf{x,~x}^\prime \in \mathcal{A}^{2\times1} \right\}.
\end{equation}
From \eqref{false_dec}, the optimal precoder is given by $\textbf{P}^{opt} = \underset{\textbf{P}}{\operatorname{argmax}}\{E_{min}(\textbf{P})\}$, which may or may not be unique. In \cite{optimal_precoder}, $\textbf{P}^{opt} \in \mathbb{C}^{2\times2}$ was obtained for $4$-QAM as follows. Using SVD, $\textbf{P} \in \mathbb{C}^{2\times2}$ can be written as $\textbf{P} = \textbf{A}\bf{\Sigma}\textbf{B}^H$, where $\textbf{A}$ is a unitary matrix of size $2\times2$ and 
\begin{equation}\label{opt_complex}
{\bf \Sigma} = \sqrt{2}\left[\begin{array}{cc}
 \cos \psi & 0\\
  0    &  \sin \psi \\
\end{array}\right], ~~ \textbf{B}^H = \left[\begin{array}{rr}
 \cos\theta & -\sin\theta \\
\sin\theta & \cos\theta \\
\end{array}\right] \left[\begin{array}{cc}
 1& 0 \\
0 & e^{j\phi} \\
\end{array}\right].
\end{equation}
For QAM constellations, because of the symmetry associated with the constellation, $0 \leq \theta \leq \pi/4$, $0 \leq \psi \leq \pi/2$ and $0 \leq \phi \leq \pi/2$. It was shown in \cite{optimal_precoder} that $\textbf{A}$ can be taken to be identity without affecting the optimality. Using numerical search, the optimal values for $\theta$, $\psi$ and $\phi$ were found out for $4$-QAM. However, there are two major obstacles when this method is used for larger QAM constellations. Firstly, numerical search becomes practically hard for larger constellations due to the fact that there are three parameters to be searched for. Secondly, numerical searches do not give a closed form expression for the optimal angles and the method employed in \cite{optimal_precoder} to obtain closed form expressions for the optimal angles for $4$-QAM is not amenable for application to larger QAM constellations. Due to these limitations, we look for a real-valued optimal precoder which also naturally offers lower ML-decoding complexity (this is elaborated in Subsection \ref{subsec_compexity}). A real-valued precoder can be expressed as $\textbf{P}(\psi,\theta) = {\bf A\Sigma B}^T$ where ${\bf A}$ can be taken to be identity without affecting optimality and 
\begin{equation*}
{\bf \Sigma} = \sqrt{2}\left[\begin{array}{cc}
 \cos \psi & 0\\
  0    &  \sin \psi \\
\end{array}\right], ~~ \textbf{B} = \left[\begin{array}{rr}
 \cos\theta & \sin\theta \\
-\sin\theta & \cos\theta \\
\end{array}\right].
\end{equation*}
Note that there are only two parameters to be searched for. Our approach towards finding the optimal precoders is also based on numerical search, but the method to obtain closed form expressions for the optimal angles is novel and easily applicable for any $M$-QAM. However, since this method is based on numerical search, it is not known if the angles are exactly optimal. Finding the exactly optimal values of $\theta$ and $\psi$ as a function of $\gamma$ involves an {\it exhaustive} search over the range of $\theta$ and $\psi$, which is practically impossible. However, a numerical search, with $\theta$ and $\psi$ varying in very small increments, gives the values of $\theta$ and $\psi$, which we denote by $\theta^*$ and $\psi^*$, respectively, such that $E_{min}(\textbf{P}(\psi^*,\theta^*))$ is nearly equal to $E_{min}(\textbf{P}^{opt})$, with $\textbf{P}^{opt}$ being the optimal real-valued precoder. For this reason, we call our precoder approximately optimal.

A square QAM signal set (not necessarily Gray coded) of size $M$ is given by 
\begin{equation}\label{qam_defn}
\mathcal{A}_{M-QAM} = \{ a + jb ~|~ a,b \in \mathcal{A}_{\sqrt{M}-PAM} \}, 
\end{equation}
where $\mathcal{A}_{\sqrt{M}-PAM} = \{ 2i-\sqrt{M}-1, i=1,2,\cdots,\sqrt{M} \}$ is a PAM constellation of size $\sqrt{M}$. Let 

\begin{equation*}
 \textbf{F}(\gamma,\psi,\theta) \triangleq \left[\begin{array}{cc}
     \cos\gamma & 0\\
0 & \sin\gamma \\                         
\end{array}\right]\left[\begin{array}{cc}
     \cos\psi & 0\\
0 & \sin\psi \\                         
\end{array}\right]\left[\begin{array}{rr}
     \cos\theta & -\sin\theta\\
\sin\theta & \cos\theta \\                         
\end{array}\right],
\end{equation*}
and 
\begin{equation}\label{delta_min}
 \delta(\gamma,\mathcal{A}) = \max_{(\psi,\theta)} \left\{ \min_{\Delta \textbf{x}| \Delta \textbf{x} \neq \textbf{O}_{2\times1}} \left\{ \Vert \textbf{F}(\gamma,\psi,\theta) \Delta \textbf{x}\Vert^2 , ~~ ~
 \Delta \textbf{x} \in \left\{\textbf{x,x}^\prime ~|~\textbf{x,x}^\prime \in \mathcal{A}^{2\times1} \right \}\right\} \right\},
\end{equation}
where, for our numerical search, we take $\psi = \varDelta.k,~k=1,2,\cdots,\left\lfloor\frac{\pi}{2\varDelta}\right\rfloor$, $
\theta= \varDelta.k,~k=1,2,\cdots,\left\lfloor\frac{\pi}{4\varDelta}\right\rfloor$, with $\varDelta$ being the increment size, taken to be 0.001 radians for our searches. Let
\begin{equation}\label{opt_theta_psi}
 (\psi^*,\theta^*) = \underset{(\psi,\theta)}{\operatorname{argmax}}\left\{ \min_{\Delta \textbf{x}, \Delta \textbf{x} \neq \textbf{O}_{2\times1}} \left\{ \Vert \textbf{F}(\gamma,\psi,\theta) \Delta \textbf{x}\Vert^2 \right\} \right\}.
\end{equation}

 We note that for $M$-QAM, $E_{min}(\textbf{P}(\psi^*,\theta^*)) = 2\rho^2\delta\left(\gamma,\mathcal{A}_{M-QAM}\right) = 2\rho^2\delta\left(\gamma,\mathcal{A}_{\sqrt{M}-PAM}\right)$. Hence, we only need to search for $\theta^*$ and $\psi^*$ for which $\delta(\gamma,\mathcal{A}_{\sqrt{M}-PAM})$ is obtained. Note that this simplification of the search to only a $\sqrt{M}$-PAM is possible since $\textbf{F}(\gamma,\psi,\theta)$ is real-valued. This is another huge advantage over the complex-valued precoder, which does not enjoy this benefit. Henceforth, $\theta^*$ and $\psi^*$ are used to denote the approximately optimal angles of $\theta$ and $\psi$. Due to our choice of the increment size, one can safely say that $\left( E_{min}(\textbf{P}^{opt})-E_{min}(\textbf{P}(\theta^*,\psi^*)) \right) < \kappa.E_{min}(\textbf{P}^{opt})$, where $\kappa$ is a very small fraction of the order of $10^{-3}$. 

The search results reveal that $\theta^*$ as a function of $\gamma$ can be written as 
\begin{equation}\label{theta_function}
\theta^* =  \sum_{k=1}^n \theta_k^* \left( u(\gamma - \gamma_{k}^\prime) -u(\gamma - \gamma_{k}^\prime-w_k)  \right),
\end{equation}
where $\theta_k^*, k=1,\cdots,n$, are constants, $n$ is the finite number of different values $\theta^*$ takes, $\gamma_k^\prime$ is the value of $\gamma$ at which $\theta^*$ changes from $\theta_{k-1}^*$ to $\theta_k^*$, with $\gamma_1^\prime = 0$, $\theta_0^* = 0$, $w_k = \gamma_{k+1}^\prime -  \gamma_k^\prime$ and $\gamma_{n+1}^\prime = \pi/4$. The search results also reveal that $\psi^*$ cannot be expressed as a weighted sum of shifted step functions and hence a closed form expression needs to be obtained analytically. To obtain this, we first obtain $\theta^*$ as follows.

\subsection{Calculating $\theta^*$}

For $M$-QAM, in order to obtain $\theta^*$ and $\psi^*$, as given by \eqref{opt_theta_psi}, the entries of $\Delta \textbf{x}$ take values from $\left\{ 2\left(-\sqrt{M}+i\right), i = 1,2,\cdots,2\sqrt{M}-1\right\}$. Let $p,q\in \{ -\sqrt{M}+i, i = 1,2,\cdots,2\sqrt{M}-1\}$ be such that

\begin{equation}\label{eq_delmin}
 4\Vert\textbf{F}\left(\gamma,\psi^*,\theta^*\right)[p~q]^T \Vert^2 = \delta(\gamma, \mathcal{A}_{M-QAM}) = \delta(\gamma, \mathcal{A}_{\sqrt{M}-PAM}).
\end{equation}

 The numerical searches done for 5 QAM constellations - $4$-/$16$-/$64$-/$256$-/$1024$-QAM reveal that 
\begin{enumerate}
 \item there are two distinct $(p,q)$ pairs for which \eqref{eq_delmin} is satisfied when $0 < \gamma \leq \gamma_2^\prime$, where $\gamma_2^\prime$ is as defined in \eqref{theta_function}. These are $(0,1)$ and $(1,\sqrt{M}-1)$. Also $\psi^* = 0$ in this range of $\gamma$.
\item There are three distinct $(p,q)$ pairs for which \eqref{eq_delmin} is satisfied when $\gamma_k^\prime \leq \gamma \leq \gamma_{k+1}^\prime$, $k = 2,\cdots,n$.
\end{enumerate}
Let \begin{equation*}
 \varepsilon(p,q,\theta^*,\psi^*) \triangleq \cos^2\gamma\cos^2\left(\psi^*\right)(p\cos\left(\theta^*\right) -q\sin\left(\theta^*\right))^2 + \sin^2\gamma\sin^2\left(\psi^*\right)(q\cos\left(\theta^*\right)+ p\sin\left(\theta^*\right))^2.
\end{equation*}
So, for $0 < \gamma \leq \gamma_2^\prime$, we have
\begin{equation*}
  \varepsilon(0,1,\theta_1^*,0) = \varepsilon(1,\sqrt{M}-1,\theta_1^*,0),
\end{equation*}
solving which we obtain $\theta_1^* = \tan^{-1}\frac{1}{\sqrt{M}}$. The other solution, which is  $\theta_1^* = \tan^{-1}\left(\frac{1}{\sqrt{M}-2}\right)$, is ruled out since it has been observed that $E_{min}\left(\textbf{P}\left(0,\tan^{-1}\left(1/\sqrt{M}\right)\right)\right) > E_{min}\left(\textbf{P}\left(0,\tan^{-1}\left(1/(\sqrt{M}-2)\right)\right)\right)$ for $0 < \gamma \leq \gamma_2^\prime$. For $\gamma_k^\prime \leq \gamma \leq \gamma_{k+1}^\prime$, $k = 2,\cdots,n$, we have
\begin{equation*}
 \varepsilon(p_1,q_1,\theta_k^*,\psi^*) = \varepsilon(p_2,q_2,\theta_k^*,\psi^*) = \varepsilon(p_3,q_3,\theta_k^*,\psi^*),
\end{equation*}
where $(p_1,q_1)$, $(p_2,q_2)$ and $(p_3,q_3)$ are the three pairs for which \eqref{eq_delmin} is satisfied. Solving them, we arrive at 
\begin{equation}\label{eq1}
 \tan^2\gamma\tan^2\left(\psi^*\right) = 1 + \frac{p_1^2+q_1^2-p_2^2-q_2^2}{(p_2q_2-p_1q_1)\sin\left(2\theta_k^*\right) + \left( q_2^2-q_1^2 \right)\cos^2\left(\theta_k^*\right) + \left(p_2^2-p_1^2\right) \sin^2\left(\theta_k^*\right)},
\end{equation}
\begin{equation}\label{eq2}
 \tan^2\gamma\tan^2\left(\psi^*\right) = 1 + \frac{p_1^2+q_1^2-p_3^2-q_3^2}{(p_3q_3-p_1q_1)\sin\left(2\theta_k^*\right) + \left( q_3^2-q_1^2 \right)\cos^2\left(\theta_k^*\right) + \left(p_3^2-p_1^2\right) \sin^2\left(\theta_k^*\right)}.
\end{equation}
Equating \eqref{eq1} and \eqref{eq2}, we obtain
\begin{equation}\label{theta_eq}
 (a_1d_2-a_2d_1)\tan^2\left(\theta_k^*\right) + 2(a_1b_2-a_2b_1)\tan\left(\theta_k^*\right) + a_1c_2-a_2c_1 = 0,
\end{equation}
where $a_1 = p_1^2+q_1^2-p_2^2-q_2^2$, $b_1 = p_2q_2-p_1q_1$, $c_1 = q_2^2-q_1^2$, $d_1 = p_2^2-p_1^2$, $a_2 = p_1^2+q_1^2-p_3^2-q_3^2$, $b_2 = p_3q_3-p_1q_1$, $c_2 = q_3^2-q_1^2$ and $d_2 = p_3^2-p_1^2$. Equation \eqref{theta_eq} has been observed to have only one solution in the range $(0,\pi/4)$. This solution gives $\theta_k^*$.

\subsection{Calculating $\psi^*$} \label{calculate_psi}

As mentioned before, $\psi^* = 0$ for $0 < \gamma \leq \gamma_2^\prime$. In order to obtain $\psi^*$ for $\gamma_k^\prime \leq \gamma \leq \gamma_{k+1}^\prime$, $k=2,\cdots,n$, we note from \eqref{eq1} and \eqref{eq2} that $\tan^2\gamma\tan^2\left(\psi^*\right)$ is constant in that range of $\gamma$ and hence,

\begin{equation}\label{tan_gammatan_phi}
 \psi^* = \tan^{-1}\left( \frac{\sqrt{A_k}}{\tan\gamma}\right),
\end{equation}
where $A_k$ is given by the R.H.S of \eqref{eq1} (or \eqref{eq2}).

\subsection{Calculating $\gamma_k^\prime$}
Having obtained $\theta^*$ and $\psi^*$, we proceed to find the exact values of $\gamma_k^\prime$, $k=2,\cdots,n$ as follows. For convenience, let $\psi^*(\theta_k^*,\gamma) \triangleq \psi^*$ (given by \eqref{tan_gammatan_phi}) for $\gamma_k^\prime \leq \gamma \leq \gamma_{k+1}^\prime$, $k=1,\cdots,n$. Since $\theta^*$ is discontinuous at $\gamma_k^\prime$, where it makes a transition from $\theta_{k-1}^*$ to $\theta_{k}^*$, $k \geq 2$, we have
\begin{equation*}
 \varepsilon\left(p_{k-1},q_{k-1},\theta_{k-1}^*,\psi^*(\theta_{k-1}^*,\gamma_k^\prime)\right) = \varepsilon\left(p_{k},q_{k},\theta_{k}^*,\psi^*(\theta_{k}^*,\gamma_k^\prime)\right),
\end{equation*}
where the pairs $(p_{k-1},q_{k-1})$ and $(p_{k},q_{k})$ satisfy \eqref{eq_delmin} for $\gamma_{k-1}^\prime \leq \gamma \leq \gamma_{k}^\prime$ and $\gamma_k^\prime \leq \gamma \leq \gamma_{k+1}^\prime$, respectively. So, we have
\begin{equation}\label{eq4}
 \cos^2\left(\psi^*(\theta_{k-1}^*,\gamma_k^\prime)\right) = \cos^2\left(\psi^*(\theta_{k}^*,\gamma_k^\prime)\right)\left(\frac{c+A_{k}d}{a+A_{k-1}b}\right),
\end{equation}
\begin{equation}\label{eq5}
 \sin^2\left(\psi^*(\theta_{k-1}^*,\gamma_k^\prime)\right) = \sin^2\left(\psi^*(\theta_{k}^*,\gamma_k^\prime)\right)\left(\frac{A_{k-1}}{A_{k}}\right)\left(\frac{c+A_{k}d}{a+A_{k-1}b}\right),
\end{equation}
where $a = \left(p_{k-1}\cos\left(\theta_{k-1}^* \right) -q_{k-1}\sin\left(\theta_{k-1}^* \right)\right)^2 $, $b = \left(q_{k-1}\cos\left(\theta_{k-1}^* \right)+ p_{k-1}\sin\left(\theta_{k-1}^* \right)\right)^2$,\\ $c=\left(p_{k}\cos\left(\theta_{k}^* \right)-q_{k}\sin\left(\theta_{k}^* \right)\right)^2$, $d=\left(q_{k}\cos\left(\theta_{k}^* \right) +p_{k}\sin\left(\theta_{k}^* \right)\right)^2$, and as explained in Subsection \ref{calculate_psi}, $A_{k-1}$ and $A_{k}$ are constants given by $A_{k-1} = \tan^2\gamma\tan^2\left( \psi^*(\theta_{k-1}^*,\gamma)\right)$ for $\gamma_{k-1}^\prime \leq \gamma \leq \gamma_k^\prime$, $ A_{k}= \tan^2\gamma\tan^2\left( \psi^*(\theta_{k}^*,\gamma)\right)$ for $\gamma_{k}^\prime \leq \gamma \leq \gamma_{k+1}^\prime$. Solving \eqref{eq4} and \eqref{eq5}, we obtain,
\begin{equation}\label{eq6}
\psi^*(\theta_{k}^*, \gamma_k^\prime) = \sin^{-1}\sqrt{\left(\frac{A_{k-1}}{A_{k}}-1\right)^{-1} \left( \frac{a+A_{k-1}b}{c+A_{k}d} -1\right)}.
\end{equation}
Using \eqref{tan_gammatan_phi} and \eqref{eq6},
\begin{equation*}
 \gamma_k^\prime = \tan^{-1} \left( \frac{\sqrt{A_k}}{\tan\left(\psi^*\left(\theta_{k}^*, \gamma_k^\prime\right)\right)}\right).
\end{equation*}

\noindent The value of $\delta(\gamma,\mathcal{A}_{M-QAM})$ as defined in \eqref{delta_min} for $\gamma_k^\prime \leq \gamma  \leq \gamma_{k+1}^\prime$ is given by
\begin{equation}\label{explicit_delmin}
 \delta(\gamma,\mathcal{A}_{M-QAM}) = \sin^2\gamma\left(\frac{c + A_kd}{A_k + \tan^2\gamma}\right), 	
\end{equation}
where $c = \left(p_k\cos\left(\theta_k^*\right) - q_k\sin\left(\theta_k^*\right)\right)^2$, $d = \left(q_k\cos\left(\theta_k^*\right) + p_k\sin\left(\theta_k^*\right)\right)^2$, with $(p_k,q_k)$ any of the $(p,q)$ pairs satisfying \eqref{eq_delmin}.

Table \ref{tableqam} presents the values of $\theta^*$ for different values of $\gamma$ for $4$-QAM, $16$-QAM, $64$-QAM, $256$-QAM and $1024$-QAM. The value of the constants $\tan\gamma\tan\left(\psi^*\right)$ and the corresponding pairs $(p,q)$ for which \eqref{eq_delmin} is satisfied are also tabulated. Except for the case of $4$-QAM, the values presented in Table \ref{tableqam} are the approximately optimal values rounded off to the fourth decimal. This has been done since it is very cumbersome to express them in the exact form. All angles are expressed in radians. Noting the values of $\theta^*$ for $4$-QAM, it is natural to believe that the angles tabulated are optimal for $4$-QAM. Also, it can be noted that for every subsequent larger constellation, $\theta^*$ differs from its corresponding values for the lower-sized constellation only at low values of $\gamma$, meaning which the numerical search need not be done over the entire range of $\gamma$ as the size of the constellation increases. The plots of $\delta(\gamma,\mathcal{A}_{M-QAM})$ as a function of $\gamma$ for the different unnormalized QAM constellations are given in Fig. \ref{min_dist_plots}. The curves for $256$- and $1024$-QAM appear to coincide, since they differ only at extremely low values of $\gamma$. In Fig. \ref{min_dist4}, the plots\footnote[4]{In all the plots, the E-$d_{min}$ precoder, our precoder and the $X$-precoder use $M$-QAM, while the $Y$-precoder uses a two-dimensional codebook of size $M$, as defined in \cite{xy_codes}.} of $\delta(\gamma,\mathcal{A})$ for the E-$d_{min}$ precoder, the proposed precoder, the $X$-precoder and the $Y$-precoder are given for $M=4$ with the same power constraint for all the precoders as for our precoder. As was expected, the E-$d_{min}$ precoder has the best values of $\delta(\gamma,\mathcal{A})$ over the entire range of $\gamma$ while our precoder has better values of $\delta(\gamma,\mathcal{A})$ than the $X$- and $Y$-precoders. Fig. \ref{min_dist16} and Fig. \ref{min_dist64} show the plots of $\delta(\gamma,\mathcal{A})$ for our precoder, the $X$-precoder and the $Y$-precoder for $M = 16$ and $M=64$, respectively. For the $X$-precoder, the plots were obtained using numerical searches to obtain the approximately optimal angle for each value of $\gamma$ in the range $(0,\pi/4)$, with $\gamma$ increasing in step sizes of 0.001. Note that for low values of $\gamma$, our precoder and the $Y$-precoder have identical $\delta(\gamma,\mathcal{A})$, which is because both transmission schemes are effectively the same in this range of $\gamma$. With an increase in the constellation size, the $Y$-precoder has increasingly lower values of $\delta(\gamma,\mathcal{A})$ than that of our precoder and the $X$-precoder at higher values of $\gamma$. It is also clear from the plots that the $Y$-precoder is expected to have better error performance than the $X$-precoder only for ill-conditioned channels, i.e., for low values of $\gamma$.

\subsection{ML-decoding complexity}\label{subsec_compexity}
We make use of the following lemma to analyze the ML-decoding complexity of our precoder.
\begin{lemma}\label{lemma4}
 For symbols $x_1$ and $x_2$ taking values from $\mathcal{A}_{M-QAM}$, the symbol $ax_1 + bx_2$ takes values from $\mathcal{A}_{M^2-QAM}$ if $a = \sqrt{M}$, $b=1$ or $b = \sqrt{M}$, $a=1$.
\end{lemma}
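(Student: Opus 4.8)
The plan is to reduce the complex claim to a statement about the constituent PAM constellations and then settle it by an explicit index computation. By symmetry it suffices to treat the case $a = \sqrt{M}$, $b = 1$; the case $a = 1$, $b = \sqrt{M}$ follows by swapping the roles of $x_1$ and $x_2$. Since $a$ and $b$ are real scalars, and $\mathcal{A}_{M-QAM}$ decomposes as in \eqref{qam_defn} into independent real and imaginary PAM components, the real and imaginary parts of $\sqrt{M}x_1 + x_2$ separate completely. Writing $x_1 = p_1 + jq_1$ and $x_2 = p_2 + jq_2$ with $p_1, q_1, p_2, q_2 \in \mathcal{A}_{\sqrt{M}-PAM}$, I have $\sqrt{M}x_1 + x_2 = (\sqrt{M}p_1 + p_2) + j(\sqrt{M}q_1 + q_2)$. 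Recalling that $\mathcal{A}_{M^2-QAM}$ has real and imaginary parts drawn from the PAM set $\mathcal{A}_{M-PAM} = \{2i - M - 1, i = 1,\ldots,M\}$ (since $\sqrt{M^2} = M$), the lemma reduces to the single real claim that $\sqrt{M}p + p' \in \mathcal{A}_{M-PAM}$ whenever $p, p' \in \mathcal{A}_{\sqrt{M}-PAM}$.

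To prove this I would parametrize the PAM points using the definition below \eqref{qam_defn}: write $p = 2i - \sqrt{M} - 1$ and $p' = 2k - \sqrt{M} - 1$ for indices $i, k \in \{1, 2, \ldots, \sqrt{M}\}$. Substituting and collecting terms, a short computation gives
\begin{equation*}
\sqrt{M}p + p' = 2\bigl(\sqrt{M}(i-1) + k\bigr) - M - 1.
\end{equation*}
Setting $j \triangleq \sqrt{M}(i-1) + k$, this is precisely $2j - M - 1$, which has the form of a generic element of $\mathcal{A}_{M-PAM}$ provided $j$ is an integer in $\{1, 2, \ldots, M\}$.

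The remaining step, which is the crux, is to pin down the range of $j$. Since $\sqrt{M} = 2^a$ is a positive integer, $j$ is clearly an integer. The bounds $1 \le k \le \sqrt{M}$ and $0 \le i-1 \le \sqrt{M}-1$ force $1 \le j \le M$, the minimum $j = 1$ being attained at $i = k = 1$ and the maximum $j = \sqrt{M}(\sqrt{M}-1) + \sqrt{M} = M$ at $i = k = \sqrt{M}$. Hence $\sqrt{M}p + p' \in \mathcal{A}_{M-PAM}$, and applying the identical argument to the imaginary part yields $\sqrt{M}x_1 + x_2 \in \mathcal{A}_{M^2-QAM}$. I do not anticipate any genuine obstacle here; the only point requiring care is the index bookkeeping. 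It is worth observing, moreover, that $(i,k) \mapsto \sqrt{M}(i-1)+k$ is exactly the standard base-$\sqrt{M}$ (mixed-radix) enumeration of $\{1, \ldots, M\}$, so the map is not merely \emph{into} but a \emph{bijection} onto $\mathcal{A}_{M-PAM}$. This is stronger than the lemma requires, but it confirms that no point of $\mathcal{A}_{M^2-QAM}$ is repeated or omitted, which is precisely what makes the construction useful for the decoding-complexity argument that follows.
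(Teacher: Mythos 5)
Your proof is correct, and it takes a genuinely different route from the paper's. The paper argues geometrically: it views the set $\left\{\sqrt{M}x_1 + x_2\right\}$ as the scaled constellation $\sqrt{M}\mathcal{A}_{M-QAM}$ (whose adjacent points are $2\sqrt{M}$ apart) with every point replaced by a translated copy of $\mathcal{A}_{M-QAM}$ (whose points span $-(\sqrt{M}-1)$ to $\sqrt{M}-1$ with spacing $2$ in each coordinate), and concludes that the copies tile a square grid of $M^2$ points with spacing $2$, i.e.\ an $M^2$-QAM. You instead reduce to one real dimension via the Cartesian-product structure of \eqref{qam_defn} and carry out the explicit index computation $\sqrt{M}p + p' = 2\bigl(\sqrt{M}(i-1)+k\bigr) - M - 1$, identifying $(i,k)\mapsto \sqrt{M}(i-1)+k$ as the base-$\sqrt{M}$ enumeration of $\{1,\ldots,M\}$. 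What your approach buys is rigor on a point the paper leaves tacit: the paper's claim that the resulting set ``has $M^2$ signal points'' silently assumes the $M$ translated copies neither overlap nor leave gaps, whereas your bijection argument establishes this outright. It also makes the inverse map explicit, which is exactly what the decoding operations \eqref{op1} and \eqref{op2} implement, so your version connects more directly to the complexity claim in Theorem \ref{thm2}. What the paper's argument buys is geometric intuition at the cost of formality; both are valid, and yours is, if anything, the more airtight of the two.
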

\begin{proof}
 Firstly, $A_{M-QAM}$ represents the standard, unnormalized $M$-QAM constellation, as given in \eqref{qam_defn}. Let $\sqrt{M}\mathcal{A}_{M-QAM}$ denote the $M$-QAM constellation scaled by $\sqrt{M}$. So, the distance between any two adjacent signal points on the same vertical or horizontal line of $\sqrt{M}\mathcal{A}_{M-QAM}$ is $2\sqrt{M}$. Now, the constellation given by 
\begin{equation}\label{eq_constellation}
 \mathcal{A} = \left\{\sqrt{M}x_1 + x_2 ~|~ x_1, x_2 \in \mathcal{A}_{M-QAM} \right\}
\end{equation}
can be viewed to be obtained by replacing every element of $\sqrt{M}\mathcal{A}_{M-QAM}$ by the entire constellation $\mathcal{A}_{M-QAM}$ such that the origin of $\mathcal{A}_{M-QAM}$ is the signal point being replaced. Hence, $\mathcal{A}$ has $M^2$ signal points and a QAM structure, and the distance between adjacent points on the same vertical or horizontal line is 2. Therefore, $\mathcal{A}$ is an $M^2$-QAM. 
\end{proof}

The following theorem gives the ML-decoding complexity of the precoder.
\begin{theorem}\label{thm2}
For the proposed precoder, the following claims hold.
\begin{enumerate}
 \item The ML-decoding complexity is $\mathcal{O}(\sqrt{M})$, when $\gamma_k^\prime \leq \gamma \leq \gamma_{k+1}^\prime$, $k=2,3,\cdots,n$.
\item The ML-decoding complexity is the same as that of a real scalar channel when $0 < \gamma \leq \gamma_{2}^\prime$, with no exhaustive search over all the signal points required.
\end{enumerate}
\end{theorem}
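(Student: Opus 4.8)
The plan is to exploit two structural facts: that the effective channel $\textbf{DP} = \rho\sqrt{2}\,\textbf{F}(\gamma,\psi^*,\theta^*)$ is \emph{real-valued}, and that $\textbf{F}$ degenerates in the regime $0 < \gamma \leq \gamma_2'$. First I would write $\textbf{x} = \textbf{x}_I + j\textbf{x}_Q$ and substitute into the metric \eqref{ml_metric}. Setting $c \triangleq \rho\sqrt{2\,SNR/(n_tE)}$ so that $\sqrt{SNR/(n_tE)}\,\textbf{DP} = c\,\textbf{F}$, and using that $\textbf{F}$ is real, the cross-terms between $\textbf{x}_I$ and $\textbf{x}_Q$ cancel and the metric separates as $m(\textbf{x}) = \Vert \textbf{y}'_I - c\,\textbf{F}\textbf{x}_I\Vert^2 + \Vert \textbf{y}'_Q - c\,\textbf{F}\textbf{x}_Q\Vert^2$. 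Hence the in-phase and quadrature components decode \emph{independently}, and each reduces to jointly decoding the two real $\sqrt{M}$-PAM entries of $\textbf{x}_I$ (respectively $\textbf{x}_Q$) over the real $2\times2$ channel $\textbf{F}$.

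For claim 1, I would apply conditional decoding to this real $2\times2$ problem. Writing $\textbf{x}_I = [a~b]^T$ with $a,b \in \mathcal{A}_{\sqrt{M}-PAM}$, I fix $a$ (one of $\sqrt{M}$ candidates); the residual metric is then a convex quadratic in the single variable $b$, whose unconstrained minimizer $\hat{b}$ is available in closed form. Because the PAM alphabet is equally spaced, the constrained minimizer is exactly $\hat{b}$ rounded to the nearest point of $\mathcal{A}_{\sqrt{M}-PAM}$ (saturated at the endpoints), computable in $\mathcal{O}(1)$. Sweeping $a$ thus costs $\mathcal{O}(\sqrt{M})$ for the in-phase part and, identically, for the quadrature part, giving $\mathcal{O}(\sqrt{M})$ overall.

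For claim 2, I would specialize to $0 < \gamma \leq \gamma_2'$, where $\psi^* = 0$ and $\theta^* = \theta_1^* = \tan^{-1}(1/\sqrt{M})$. Setting $\psi = 0$ annihilates the entire second row of $\textbf{F}$, so the second observation is pure noise and the in-phase metric depends on $(a,b)$ only through $f_{11}a + f_{12}b = \frac{\cos\gamma}{\sqrt{M+1}}(\sqrt{M}\,a - b)$. The crucial step is to invoke the $\sqrt{M}$-PAM analogue of Lemma \ref{lemma4}: the map $(a,b)\mapsto \sqrt{M}\,a - b$ is a bijection from $\mathcal{A}_{\sqrt{M}-PAM}^2$ onto $\mathcal{A}_{M-PAM}$. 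The effective channel therefore collapses to a real scalar $M$-PAM channel, for which ML decoding is a single rounding of the scaled observation to the nearest $M$-PAM point --- no search over signal points --- after which $(a,b)$ is recovered by inverting the bijection.

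I expect the bijection in claim 2 to be the main obstacle rather than the complexity bookkeeping: one must confirm that $\sqrt{M}\,a - b$ hits every point of $\mathcal{A}_{M-PAM}$ without collision, so that scalar rounding truly returns the ML-optimal pair. This follows because, as $b$ varies, the $\sqrt{M}$ values $\sqrt{M}\,a - b$ form a block of width $2(\sqrt{M}-1)$ centered at $\sqrt{M}\,a$, while consecutive admissible values of $a$ shift the center by $2\sqrt{M} > 2(\sqrt{M}-1)$, so the blocks tile $\mathcal{A}_{M-PAM}$ exactly --- the same tiling argument underlying Lemma \ref{lemma4}. A lesser subtlety in claim 1 is justifying that nearest-point rounding returns the conditional minimizer, which holds since a one-dimensional convex quadratic is symmetric and monotone about its vertex.
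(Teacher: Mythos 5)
Your proposal is correct and follows essentially the same route as the paper's proof: the real-valuedness of $\textbf{DP}$ separates the in-phase and quadrature metrics, Case 1 is handled by sweeping one real $\sqrt{M}$-PAM symbol and quantizing the other via the symmetric-parabola rounding argument (the paper does this after a QR decomposition of $\textbf{DP}$, but your direct minimization of the conditional quadratic in $b$ is equivalent), and Case 2 uses $\psi^* = 0$, $\theta^* = \tan^{-1}(1/\sqrt{M})$ to collapse the system to a scalar channel carrying $\sqrt{M}a \pm b$, with your PAM tiling bijection being exactly the content of Lemma~\ref{lemma4} (the sign difference from the paper's $\sqrt{M}x_1 + x_2$ is immaterial by symmetry of the alphabet), followed by scalar rounding and inversion of the bijection as in the paper's \eqref{op1}--\eqref{op2}. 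No gaps.
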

\begin{proof}
These claims are proved below.\\
{\it Case 1} : $\gamma_k^\prime \leq \gamma \leq \gamma_{k+1}^\prime, ~~ k=2,3,\cdots,n$.

 In this case, the decoded signal vector $\check{\textbf{x}}$ is 
\begin{equation}\label{dec_1}
\check{\textbf{x}} =  \underset{\textbf{x} \in \mathcal{A}_{M-QAM}^{2\times1}}{\operatorname{argmin}} \left\{\left \Vert \textbf{y}^\prime - \sqrt{\frac{SNR}{2E_M}}\textbf{DPx} \right\Vert^2 \right\} = \underset{\textbf{x} \in \mathcal{A}_{M-QAM}^{2\times1}}{\operatorname{argmin}} \left\{\left \Vert \textbf{y}^{\prime\prime} - \sqrt{\frac{SNR}{2E_M}}\textbf{Rx} \right\Vert^2 \right\},
\end{equation}
where $\textbf{y}^\prime$, $\textbf{D}$ and $\textbf{P}$ are as defined in \eqref{revised_channel_model}, $E_M = 2(M-1)/3$ is the average energy of an $M$-QAM and $\textbf{y}^{\prime\prime} = \textbf{Q}^T\textbf{y}^\prime$, with $\textbf{Q}$ and $\textbf{R}$ obtained on the QR-decomposition of $\textbf{DP}$. Since $\textbf{D}$ and $\textbf{P}$ are real-valued, \eqref{dec_1} can be written as $\check{\textbf{x}} = \check{\textbf{x}}_I +j\check{\textbf{x}}_Q$, where
\begin{equation*}
\check{\textbf{x}}_I =  \underset{\textbf{x}_I \in \mathcal{A}_{\sqrt{M}-PAM}^{2\times1}}{\operatorname{argmin}} \left\{\left \Vert \textbf{y}_{I}^{\prime\prime} - \sqrt{\frac{SNR}{2E_M}}\textbf{Rx}_{I} \right\Vert^2 \right\}, ~~ \check{\textbf{x}}_Q = \underset{\textbf{x}_Q \in \mathcal{A}_{\sqrt{M}-PAM}^{2\times1}}{\operatorname{argmin}} \left\{\left \Vert \textbf{y}_{Q}^{\prime\prime} - \sqrt{\frac{SNR}{2E_M}}\textbf{Rx}_{Q} \right\Vert^2 \right\}, 
\end{equation*}
with 
\begin{equation*}
\textbf{y}^{\prime\prime} = \textbf{y}_{I}^{\prime\prime} + j\textbf{y}_{Q}^{\prime\prime} = \left[y_{1I}^{\prime\prime}+jy_{1Q}^{\prime\prime} ,~ y_{2I}^{\prime\prime}+jy_{2Q}^{\prime\prime} \right]^T, ~~ \check{\textbf{x}} \triangleq \check{\textbf{x}}_{I} + j\check{\textbf{x}}_{Q} = \left[\check{x}_{1I}+j\check{x}_{1Q}, ~ \check{x}_{2I}+j\check{x}_{2Q}\right]^T. 
\end{equation*}

To obtain $\check{\textbf{x}}_I$, instead of using a 2-dimensional real sphere decoder, we do the following. For each possible value of $x_{2I} \in \mathcal{A}_{\sqrt{M}-PAM}$, the corresponding value of $x_{1I}$ is evaluated as 

\begin{equation}\label{quantize}
 x_{1I} = \min\left(\max\left(2.\operatorname{rnd}\left[\frac{u+1}{2} \right]-1,-\sqrt{M}+1\right),\sqrt{M}-1 \right),
\end{equation}
where
\begin{equation*}
u = \frac{\sqrt{\frac{2E_M}{SNR}}y_{1I}^{\prime\prime}-\textbf{R}(1,2)x_{2I}}{\textbf{R}(1,1)} 
\end{equation*}
and $\check{\textbf{x}}_I$ is given by that $(x_{1I}, x_{2I})$ pair that minimizes 
\begin{equation*}
f(\textbf{x}_I) = \left \Vert \textbf{y}_{I}^{\prime\prime} - \sqrt{\frac{SNR}{2E_M}}\textbf{Rx}_{I} \right \Vert^2. 
\end{equation*}
So, there are only $\sqrt{M}$ searches (for $\sqrt{M}$ possibilities for $x_{2I}$) involved in minimizing the ML-metric. The operation shown on the R.H.S of \eqref{quantize} quantizes $x_{1I}$ to its nearest possible value for a fixed $x_{2I}$. This is made possible due to the structure of $M$-QAM which is a Cartesian product of two $\sqrt{M}$-PAM constellations. The same method can be applied to obtain $\check{\textbf{x}}_Q$. So, the ML-decoding complexity is $\mathcal{O}(\sqrt{M})$.\\
{\it Case 2:} $0 < \gamma \leq \gamma_2^\prime$.

From Table \ref{tableqam} and also as was pointed out earlier, for $0 < \gamma \leq \gamma_2^\prime$, $\psi^* = 0$ and $\theta^* = \tan^{-1}\left(\frac{1}{\sqrt{M}} \right)$. This means that transmission is made only on the first virtual subchannel and the received signal of interest, with regard to \eqref{revised_channel_model}, can be expressed as 
\begin{equation*}
 y^\prime_1 = ax^\prime + n^\prime_1,
 \end{equation*}
where $n^\prime_1$ is the first element of $\textbf{n}^\prime$, $a = \sqrt{\sigma_1^2 SNR/((M+1)E_M)}$ and $x^\prime = \sqrt{M}x_1+x_2$, where $x_1$ and $x_2$ take values from $\mathcal{A}_{M-QAM}$. From Lemma \ref{lemma4}, $x^\prime$ takes values\footnote[5]{From a bit error rate point of view, it is advisable to transmit a symbol $x_1$ alone on the first virtual subchannel, with $x_1$ taking values from a Gray coded $M^2-$QAM. This is because the constellation given by \eqref{eq_constellation} will not be Gray coded. However, with a view of minimizing the word error rate, transmission of $x^\prime = \sqrt{M}x_1+x_2$, with $x_1$ and $x_2$ taking values from $M-$QAM, is as good a strategy as transmitting  $x_1$ alone, with $x_1$ taking values from a Gray coded $M^2-$QAM.} from $\mathcal{A}_{M^2-QAM}$. So, in the first step, $x^\prime$ is decoded to obtain $\check{x}^\prime = \check{x}_I^\prime + j\check{x}_Q^\prime$ by quantizing, where $\check{x}_I^\prime$ and $\check{x}_Q^\prime$ are given by

\begin{eqnarray*}
 \check{x}^\prime_{I} & = &\min\left(\max\left(2.\operatorname{rnd}\left[\frac{\frac{y_{1I}^\prime}{a}+1}{2} \right]-1,-M+1\right),M-1 \right),\\
\check{x}^\prime_{Q} & = &\min\left(\max\left(2.\operatorname{rnd}\left[\frac{\frac{y_{1Q}^\prime}{a}+1}{2} \right]-1,-M+1\right),M-1 \right).
\end{eqnarray*}

\noindent From $\check{x}^\prime$, $x_1$ is decoded to obtain $\check{x}_{1} = \check{x}_{1I} +j\check{x}_{1Q}$, with $\check{x}_{1I}$ and $\check{x}_{1Q}$ given by
\begin{equation}\label{op1}
 \check{x}_{1I}  = \operatorname{sgn}(\check{x}^\prime_{I})\left(2\left\lceil\frac{\vert \check{x}^\prime_{I}\vert}{2\sqrt{M}} \right\rceil -1 \right), ~~
\check{x}_{1Q} =  \operatorname{sgn}(\check{x}^\prime_{Q})\left(2\left\lceil\frac{\vert \check{x}^\prime_{Q}\vert}{2\sqrt{M}} \right\rceil -1 \right)
\end{equation}
and $x_2$ is decoded to obtain $\check{x}_{2} = \check{x}_{2I} +j\check{x}_{2Q}$, with $\check{x}_{2I}$ and $\check{x}_{2Q}$ given by 
\begin{equation}\label{op2}
 \check{x}_{2I} =  \check{x}_{I}^\prime - \sqrt{M}\check{x}_{1I}, ~~ \check{x}_{2Q} =  \check{x}_{Q}^\prime - \sqrt{M}\check{x}_{1Q}.
\end{equation}
 Note that the operations shown in \eqref{op1} and \eqref{op2} together perform the inverse of the function given by
\begin{equation*}
f(\check{x}_{1I},\check{x}_{1Q},\check{x}_{2I},\check{x}_{2Q}) = \sqrt{M}(\check{x}_{1I}+j\check{x}_{1Q}) + (\check{x}_{2I}+j\check{x}_{2Q})
\end{equation*}
\noindent for $\check{x}_{1I},\check{x}_{1Q},\check{x}_{2I},\check{x}_{2Q} \in A_{\sqrt{M}-PAM}$. Therefore, decoding $x_1$ and $x_2$ requires no exhaustive search over the $M$ signal points of the constellation. 
\end{proof}

It has to be pointed out that the advantage of not having to search over any of the signal points when $0 < \gamma \leq \gamma_2^\prime$ is unique to the proposed real-valued precoder and not obtainable for the case of the complex-valued optimal precoder \cite{optimal_precoder} for 4-QAM, for which the effective constellation when $0 < \gamma \leq 0.3016$ appears like a $\pi/12$ rotated QAM constellation (it is not exactly a rotated QAM constellation, however. Hence, when, $0 < \gamma \leq 0.3016$, even the sphere decoder cannot be used, since the effective constellation is not a lattice).

\section{Extension for $n_t > 2$} \label{sec_extension}
For the case of two transmit antennas, it is possible to obtain SVD-based, approximately optimal precoders (complex-valued precoder for $4$-QAM, real-valued precoder for any $M$-QAM). Such precoders are defined by two or three parameters, depending on whether the precoder is real-valued or complex-valued, respectively. However, such an approach cannot be taken for the case of $n_t > 2$, since, even for $n_t =3$, an optimal precoder would be defined by as many as 5 parameters, ruling out the possibility of a computer search even for $4$-QAM. So, a more practical way of obtaining a precoder with a reasonable error performance is to pair the $i^{th}$ and the $(n_{min}-i+1)^{th}$ subchannels along with the $i^{th}$ and the $(n_{min}-i+1)^{th}$ symbols, $i = 1,2\cdots,n_{min}/2$ and use the precoding scheme for 2 transmit antennas for this pair. This method of pairing has been shown to be the best in \cite{edmin_precoder} and has also been adopted in \cite{xy_codes}. The precoder would then have an '$X$' structure, as in \eqref{edmin_structure}. For the $i^{th}$ subchannel pairing, $\gamma_i \triangleq \tan^{-1}(\sigma_{n_{min}-i+1}/\sigma_i)$, $\rho_i = \sqrt{\sigma_i^2+\sigma_{n_{min}-i+1}^2}$ and  
\begin{equation}\label{delmin_extended}
 \delta(\gamma_i,\mathcal{A}_{M-QAM}) = \sin^2\gamma_i\left(\frac{c_i + A_{ki}d_{i}}{A_{ki} + \tan^2\gamma_i}\right), 	
\end{equation}
where $c_i,b_i$ and $A_{ki}$ are as defined in the previous section without the subscript $i$ (refer to \eqref{explicit_delmin}) and depend on $\gamma_i$ and $M$. Proceeding on the lines of the proof of Theorem \ref{lemma3}, the instantaneous WEP $P_{e,\textbf{D}}$ is upper bounded as 

\begin{equation}\label{min_wep}
 P_{e,\textbf{D}} \leq  \left(M ^{n_{min}}-1\right)Q\left(\sqrt{\frac{SNR}{2n_tE_M}}d_{min}\right),
\end{equation}
where $E_M = 2(M-1)/3$ and $d_{min}  =   \min_{\Delta \textbf{x}, \Delta\textbf{x} \neq \textbf{O}_{n_{min}\times1}} \Vert \textbf{DP}\Delta \textbf{x}\Vert$, with $\Delta \textbf{x} \in $ $\left\{ \textbf{x-x}^\prime  |  \textbf{x,x}^\prime \in \mathcal{A}_{M-QAM}^{n_{min}\times1}  \right\}$ and $\textbf{P}$ being the precoder with the $i^{th}$ and $n_{min}-i+1^{th}$ subchannels paired using the proposed precoding scheme described in Section \ref{sec_proposed}. So, 
\begin{eqnarray*}
d_{min} & = &\min_{i}\left\{\rho_i\sqrt{\frac{2n_t\delta(\gamma_i,\mathcal{A}_{M-QAM})}{n_{min}}}\right\},
\end{eqnarray*}
\noindent with $\delta(\gamma_i,\mathcal{A}_{M-QAM})$ given by \eqref{delmin_extended}. Observe that the scaling factor of $2n_t/n_{min}$ has been used to take into account the constraint that $\Vert \textbf{P} \Vert^2 = n_t$. Since the values of $\delta(\gamma_i,\mathcal{A}_{M-QAM})$ are known, we can enhance the error performance of the precoder by pre-multiplying the precoding matrix with a power control matrix ${\bf \Upsilon} = \operatorname{diag}(\tau_1,\tau_2,\cdots, \tau_{n_{min}/2},\tau_{n_{min}/2},\cdots,\tau_2,\tau_1)$ such that
\begin{equation}\label{equal_dmin}
\tau_i^2 \rho_i^2\delta(\gamma_i,\mathcal{A}_{M-QAM}) = \eta^2,  ~~~\forall i \in \left\{ 1,2,\cdots,\frac{n_{min}}{2}\right\},
\end{equation}

\noindent where $\eta$ is a constant and the power constraint on ${\bf \Upsilon}$ is such that $\Vert {\bf \Upsilon} \Vert^2 = 2\sum_{i=1}^{n_{min}/2} \tau_i^2 = n_{min}$. Due to this power constraint, from \eqref{equal_dmin}, we obtain

\begin{equation*}
 \tau_i = \sqrt{\frac{n_{min}} {2\rho_i^2\delta(\gamma_i,\mathcal{A}_{M-QAM})}\left(\sum_{j=1}^{n_{min}/2}\frac{1}{\rho_j^2\delta(\gamma_j,\mathcal{A}_{M-QAM})}\right)^{-1 }},
\end{equation*}
where $\delta(\gamma_i,\mathcal{A}_{M-QAM})$ is obtainable from \eqref{delmin_extended}. Hence, the proposed precoder has the structure given in \eqref{edmin_structure}, where
\begin{equation*}
 \textbf{M}_i = \sqrt{\frac{2n_t\tau_i^2}{n_{min}}}\left[\begin{array}{rr}
   \cos\psi_i\cos\theta_i     &    -\cos\psi_i\sin\theta_i \\
   \sin\psi_i\sin\theta_i     &    \sin\psi_i\cos\theta_i \\       
                      \end{array}\right],
\end{equation*}
with $\psi_i$ and $\theta_i$ being the approximately optimal values obtainable from \eqref{theta_eq} and \eqref{tan_gammatan_phi}, respectively, both depending on $\gamma_i$ and $M$. For example, for a $4\times4$ system using $4$-QAM signalling, if, for some channel realization, $\gamma_1 = \tan^{-1}\left(\frac{\sigma_4}{\sigma_1}\right) = \frac{1}{4}$ and $\gamma_2 = \tan^{-1}\left(\frac{\sigma_3}{\sigma_2}\right) = \tan^{-1}\left(\sqrt{\frac{2}{3}}\right)$, then, from Table \ref{tableqam}, $\theta_1 = \tan^{-1}(1/2)$, $\psi_1 = 0$, $\theta_2 = \pi/4$, $\psi_2 = \tan^{-1}\left(\frac{1}{\sqrt{3}\tan(\gamma_2)}\right)$
and $\tau_1 = \sqrt{\frac{5\eta^2}{\left(\sigma_1^2+\sigma_4^2\right)\cos^2\gamma_1}}$, $\tau_2 = \sqrt{\frac{\left(1+3\tan^2\gamma_2\right)\eta^2}{2\left(\sigma_2^2+\sigma_3^2\right)\sin^2\gamma_2}}$, $\eta^2 = 2\left( \frac{5}{\left(\sigma_1^2+\sigma_4^2\right)\cos^2\gamma_1} + \frac{1+3\tan^2\gamma_2}{2\left(\sigma_2^2+\sigma_3^2\right)\sin^2\gamma_2}    \right)^{-1}$.

The upper bound on the instantaneous WEP is now given as 
\begin{equation}\label{min_wep1}
 P_{e,\textbf{D}} \leq  \left(M ^{n_{min}}-1\right)Q\left(\sqrt{\frac{SNR }{n_{min}E_M}}\eta\right),
\end{equation}
where 
\begin{equation}\label{eta_rho}
 \eta = \sqrt{\frac{n_{min}} {2}\left(\sum_{j=1}^{n_{min}/2}\frac{1}{\rho_j^2\delta(\gamma_j,\mathcal{A}_{M-QAM})}\right)^{-1 }}.
\end{equation}
It can easily be checked that 
\begin{equation*}
Q\left(\sqrt{\frac{SNR}{2n_tE_M}}d_{min}\right) \geq Q\left(\sqrt{\frac{SNR }{n_{min}E_M}}\eta\right)
\end{equation*}
and hence, the upper bound in \eqref{min_wep1} is lower than that in \eqref{min_wep}. Therefore, the use of the power control matrix enhances error performance. Note that the symbols of each subsystem can be decoded independently from the symbols of the other subsystems. Hence, the ML-decoding complexity offered by our precoding scheme is $\mathcal{O}(\sqrt{M})$.

A similar approach of using a power control matrix has been taken in \cite{edmin_precoder} for $4$-QAM, but since we need to have explicit values of $\delta(\gamma_i,\mathcal{A}_{M-QAM})$, applying this scheme for the E-$d_{min}$ precoder with larger constellations is not feasible. Structurally, the E-$d_{min}$ precoder and the $X$-precoder differ from \eqref{edmin_structure} in that for the E-$d_{min}$ precoder, $\textbf{M}_i$ is optimized using an additional parameter $\phi_i$ (as shown in \eqref{opt_complex}), while for the $X$-precoder, $\textbf{M}_i$ is optimized with $\tau_i =1$ and $\psi_i = \pi/4$. Table \ref{comp_table} gives a comparison of the various low ML-decoding complexity precoding schemes. 

\section{Diversity gain}\label{sec_div_gain}
The E-$d_{min}$ precoder, the $X$-precoder and the $Y$-precoder have all been shown to guarantee a diversity gain equal to $\left(n_t-\frac{n_{min}}{2}+1\right)\left(n_r-\frac{n_{min}}{2}+1\right)$. Recall that the condition in Theorem \ref{lemma3} is only a sufficient condition for achieving full-diversity gain equal to $n_tn_r$. It is not necessary that $\textbf{P}$ be such that $[\textbf{P}\Delta \textbf{x} ]_1 \neq 0$, for $\Delta \textbf{x} \in \{\textbf{x-x}^\prime, ~\textbf{x,x}^\prime \in \mathcal{A}^{n_{min} \times 1} \}$. This can be seen by noting that for $n_t = 2$ and $4$-QAM, our precoder does not satisfy the condition when $\theta^* = \pi/4$, but still gives full-diversity. This is proved in the following lemma.
\begin{lemma}\label{lemma5}
 The proposed precoder offers full-diversity, i.e., a diversity gain equal to $2n_r$ for $n_t =2$.
\end{lemma}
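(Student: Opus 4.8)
The plan is to re-run the diversity argument from the proof of Theorem \ref{lemma3}, but to replace the role of the first-entry quantity $\sigma_1\epsilon_{min}$ by the \emph{full} minimum distance $d_{min}$ of the precoded constellation. The obstruction that the authors flag is that when $\theta^*=\pi/4$ there exist nonzero $\Delta\textbf{x}$ (take $\Delta x_1=\Delta x_2\neq 0$) with $[\textbf{P}\Delta\textbf{x}]_1=0$, so the sufficient condition of Theorem \ref{lemma3} fails. The key observation that rescues full diversity is that the optimized minimum distance $E_{min}(\textbf{P}(\psi^*,\theta^*))=2\rho^2\delta(\gamma,\mathcal{A})$ still scales with the \emph{largest} eigenvalue $\lambda_1$ of $\textbf{HH}^H$, because $\delta(\gamma,\mathcal{A})$ is bounded away from zero uniformly in $\gamma$.

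First I would recall that for $n_t=2$ the pairwise error probability is $Q\big(\sqrt{\tfrac{SNR}{2n_tE}}\,\Vert\textbf{DP}\Delta\textbf{x}\Vert\big)$, so a union bound over the $M^2-1$ competitors, dominated by the smallest distance, gives
\begin{equation*}
P_{e,\textbf{D}} \le (M^2-1)\,Q\!\left(\sqrt{\tfrac{SNR}{2n_tE}}\,d_{min}\right),\qquad d_{min}^2 = E_{min}(\textbf{P}(\psi^*,\theta^*)) = 2\rho^2\,\delta(\gamma,\mathcal{A}).
\end{equation*}
Here $\rho^2=\sigma_1^2+\sigma_2^2=\lambda_1+\lambda_2\ge\lambda_1$, so it suffices to bound $\delta(\gamma,\mathcal{A})$ below by a positive constant to conclude that $d_{min}^2$ dominates a positive multiple of $\lambda_1$.

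The main obstacle — and the heart of the argument — is establishing $\delta(\gamma,\mathcal{A})\ge\delta_0>0$ uniformly over $\gamma\in(0,\pi/4]$. I would do this \emph{without} invoking the numerically obtained optimal $(\psi^*,\theta^*)$, by lower-bounding the max-min $\delta(\gamma,\mathcal{A})$ through the admissible but suboptimal ``strong-subchannel-only'' choice $\psi=0$, $\theta=\tan^{-1}(1/\sqrt{M})$. For this choice $\textbf{F}(\gamma,0,\theta)$ has only its first row nonzero and $\Vert\textbf{F}\Delta\textbf{x}\Vert^2=\tfrac{\cos^2\gamma}{M+1}\,\vert\sqrt{M}\,\Delta x_1-\Delta x_2\vert^2$. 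Since $\sqrt{M}=2^a$ is an even integer and every entry of $\Delta\textbf{x}$ lies in $2\mathbb{Z}+2\mathbb{Z}j$, the quantity $\sqrt{M}\,\Delta x_1-\Delta x_2$ has both coordinates in $2\mathbb{Z}$ and cannot vanish for $\Delta\textbf{x}\neq\textbf{O}_{2\times1}$; its minimum squared magnitude is $4$, attained at $\Delta x_1=0$, $\Delta x_2=\pm2$. Hence $\delta(\gamma,\mathcal{A})\ge\tfrac{4\cos^2\gamma}{M+1}\ge\tfrac{4\cos^2(\pi/4)}{M+1}=\tfrac{2}{M+1}=:\delta_0>0$ for every $\gamma\in(0,\pi/4]$.

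Finally I would combine the pieces: $d_{min}^2\ge 2\delta_0\rho^2\ge 2\delta_0\lambda_1$, so by monotonicity of $Q$,
\begin{equation*}
P_{e,\textbf{D}} \le (M^2-1)\,Q\!\left(\sqrt{\tfrac{\delta_0}{n_tE}\,\lambda_1\,SNR}\right).
\end{equation*}
Because this bound is uniform over the channel, averaging over $\textbf{H}$ reduces to averaging over $\lambda_1$ alone. By Theorem \ref{lemma2} the first-order expansion of $f_{\lambda_1}$ near $0$ has exponent $d_1=(n_t)(n_r)-1=2n_r-1$, so applying Theorem \ref{lemma1} with $t=2n_r-1$ yields $P_e\le C\cdot SNR^{-2n_r}+o\!\left(SNR^{-2n_r}\right)$ with $C<\infty$. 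Since $2n_r=n_tn_r$ is the maximum achievable diversity, the proposed precoder is full-diversity. I would emphasize that no case split on $\gamma$ is required: the uniform bound $\delta(\gamma,\mathcal{A})\ge\delta_0$ automatically covers the $\theta^*=\pi/4$ regime that violates the sufficient condition of Theorem \ref{lemma3}, which is exactly the situation the lemma is designed to resolve.
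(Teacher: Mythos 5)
Your proof is correct, but it takes a genuinely different route from the paper's. The paper proves Lemma \ref{lemma5} by a comparison argument: it introduces the $2\times 2$ full-diversity rotation matrix of \cite{full_div} (the ``lattice precoder''), which satisfies the sufficient condition of Theorem \ref{lemma3} since its nonzero product distance guarantees $[\textbf{P}\Delta\textbf{x}]_1\neq 0$, and then observes that because the proposed precoder is (approximately) optimal among real-valued precoders, its $\delta(\gamma,\mathcal{A}_{M-QAM})$ dominates that of the lattice precoder for every $\gamma$; hence its error performance, and so its diversity, can be no worse. You instead lower-bound the max-min $\delta(\gamma,\mathcal{A})$ directly by the explicit suboptimal choice $\psi=0$, $\theta=\tan^{-1}(1/\sqrt{M})$, obtain the uniform constant $\delta_0=2/(M+1)$, and re-run the union-bound machinery of Theorem \ref{lemma3} through Theorems \ref{lemma2} and \ref{lemma1} with $\lambda_1$. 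This buys a self-contained, quantitative argument: no appeal to algebraic rotations or product-distance tables, an explicit coding-gain constant, and uniform coverage of the $\theta^*=\pi/4$ regime. It is worth noting that your reference point itself satisfies the hypothesis of Theorem \ref{lemma3} (only the first entry of $\textbf{P}\Delta\textbf{x}$ need be nonzero), so both proofs share the same skeleton---dominate the optimized minimum distance by that of an explicit full-diversity reference---and differ only in the choice of reference. Two small points to tighten: (i) evenness of the coordinates of $\sqrt{M}\,\Delta x_1-\Delta x_2$ does not by itself rule out vanishing (zero is even); nonvanishing needs the range bound that each coordinate of $\Delta x_2$ has magnitude at most $2(\sqrt{M}-1)<2\sqrt{M}$, which is exactly the injectivity underlying Lemma \ref{lemma4} and could simply be cited; (ii) since $(\psi^*,\theta^*)$ are obtained from a discretized search, strictly speaking you should either insert the paper's near-optimality factor $(1-\kappa)$ or note that the closed-form angles coincide with your reference choice for $0<\gamma\leq\gamma_2^\prime$---either way the diversity order $2n_r$ is unaffected.
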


\begin{proof}
 Consider the precoder given by
\begin{equation*}
 \textbf{P} = \left[\begin{array}{rr}
                     \cos\left(0.5\tan^{-1}2\right) & -\sin\left(0.5\tan^{-1}2\right) \\
\sin\left(0.5\tan^{-1}2\right) & \cos\left(0.5\tan^{-1}2\right)\\
                    \end{array}\right],
\end{equation*}
which is the full-diversity rotation matrix \cite{full_div} in 2 dimensions and has the highest non-zero product distance among all $2\times2$ sized orthogonal matrices. This precoder, which we call the lattice precoder for $n_t=2$, has full-diversity from Theorem \ref{lemma3}. Clearly, $\delta(\gamma,\mathcal{A}_{M-QAM})$ for any value of $\gamma$ is greater for our precoder than that for the lattice precoder, since our precoder is approximately optimal among real-valued precoders. So, our precoder has better error performance than the lattice precoder. Hence, our precoder too offers full-diversity, like the lattice precoder for $n_t=2$. 
\end{proof}

From Lemma \ref{lemma5}, Theorem \ref{lemma2} and Theorem \ref{lemma3}, one would be inclined to believe that for $n_t >2$, a subsystem with index
$i$, for which the $i^{th}$ and the $n_{min}-i+1^{th}$ virtual subchannels are paired and the $i^{th}$ and the $n_{min}-i+1^{th}$ symbols precoded by the scheme  proposed in Section \ref{sec_proposed}, has a diversity gain of $(n_t-i+1)(n_r-i+1)$, with $i = 1,2,\cdots,n_{min}/2$, in which case the diversity gain of the whole system would be the minimum of the diversity gains of all the subsystems, i.e., $(n_t-n_{min}/2+1)(n_r-n_{min}/2+1)$. In fact, the diversity gains of systems using the E-$d_{min}$ precoder and the $X$-precoder have been claimed to be $(n_t-n_{min}/2+1)(n_r-n_{min}/2+1)$ due to this reason. It must be noted that the power control matrix ${\bf \Upsilon}$ plays an important role in the error performance of our precoder (also the E-$d_{min}$ precoder for $4$-QAM), as explained in Section \ref{sec_extension}. Before we analyze the achievable diversity gain of the system with the proposed precoding scheme, the following important observation needs to be made about $\delta(\gamma_i, \mathcal{A}_{M-QAM})$. Since $\sigma_1 \geq \sigma_2 \geq \cdots \sigma_{n_{min}}$, we have  $ \frac{\sigma_{n_{min}}}{\sigma_1} \leq \frac{\sigma_{n_{min}-1}}{\sigma_2} \leq \cdots \frac{\sigma_{\frac{n_{min}}{2}+1}}{\sigma_{\frac{n_{min}}{2}}}$. Consequently, 
\begin{equation*}
 \tan^{-1}\left(\frac{\sigma_{n_{min}}}{\sigma_1}\right) \leq \tan^{-1}\left(\frac{\sigma_{n_{min}-1}}{\sigma_2}\right) \leq \cdots \tan^{-1}\left(\frac{\sigma_{\frac{n_{min}}{2}+1}}{\sigma_{\frac{n_{min}}{2}}}\right)
\end{equation*}
and therefore, $\gamma_1 \leq \gamma_2 \cdots \leq \gamma_{n_{min}/2}$. From Fig. \ref{min_dist_plots}, except for the case of $4$-QAM, we can conclude that $
\delta(\gamma_i,\mathcal{A}_{M-QAM}) \leq \delta(\gamma_j,\mathcal{A}_{M-QAM})$, for $i<j$. Due to this fact, although it is expected that for $1 \leq i < j \leq n_{min}/2$, $\rho_i^2 \geq \rho_j^2$, it is not guaranteed that $\rho_i^2\delta(\gamma_i,\mathcal{A}_{M-QAM}) > \rho_j^2\delta(\gamma_j,\mathcal{A}_{M-QAM})$, due to which even without the use of ${\bf \Upsilon}$, the overall diversity gain of the system might be higher than $(n_t-n_{min}/2+1)(n_r-n_{min}/2+1)$ (this holds true even for the $X$-precoder). With the use of ${\bf \Upsilon}$ for our proposed precoder, the channel dependent instantaneous WEP is dependent on $\eta$, as seen in \eqref{min_wep1}. Let
\begin{equation*}
 \zeta \triangleq \frac{\eta}{\rho_1\sqrt{\delta(\gamma_1,\mathcal{A}_{M-QAM})}}
\end{equation*}
and $P_{\zeta}$ be the probability that $\zeta < 1$.

In Table \ref{table_div}, we tabulate the values of $\zeta_{min}$, which is the minimum value of $\zeta$ obtained on simulations for $10^7$ channel realizations, and $P_{\zeta}$, which is again calculated by simulating $10^7$ channel realizations, for different MIMO systems. In the table, we observe that for $n_t =16,32$ and for $M\geq 64$, $\zeta$ is always greater that 1. This can be attributed to the fact that for higher values of $n_{min}$, the ratio of $\sigma_{n_{min}}$ to $\sigma_1$ is very low and the corresponding value of $\delta(\gamma_1,\mathcal{A}_{M-QAM})$ is also very low. For such systems, we can safely say that the full-diversity gain equal to $n_tn_r$ is achieved (since $\sigma_1^2$ is associated with a diversity gain of $n_tn_r$). For other systems, the simulations results in Table \ref{table_div} seem to indicate that there exists a $\zeta_{min} > 0$ such that $\zeta_{min}\leq \zeta$, i.e., $\zeta$ is lower bounded by $\zeta_{min}$. So, from \eqref{min_wep1}, 
\begin{equation*}
  P_{e,\textbf{D}} \leq  \left(M ^{n_{min}}-1\right)Q\left(\sqrt{\frac{SNR.\zeta_{min}^2\rho_1^2\delta(\gamma_1,\mathcal{A}_{M-QAM})}{n_{min}E_M}}\right).
\end{equation*}
Let $\delta_M = \min\{\delta(\gamma_1,\mathcal{A}_{M-QAM})\}$, which is a constant depending on $M$. Then, 
\begin{eqnarray*}
 P_{e,\textbf{D}} & \leq & \left(M ^{n_{min}}-1\right)Q\left(\sqrt{\frac{SNR.\zeta_{min}^2\rho_1^2\delta_M}{n_{min}E_M}}\right)\leq \left(M ^{n_{min}}-1\right)Q\left(\sqrt{\frac{SNR.\zeta_{min}^2\sigma_1^2\delta_M}{n_{min}E_M}}\right)\\
& = & \left(M ^{n_{min}}-1\right)Q\left(\sqrt{\left(\frac{\zeta_{min}^2\delta_M}{n_{min}E_M}\right)\lambda_1SNR}\right),
\end{eqnarray*} 
where, as used throughout the paper, $\lambda_1 = \sigma_1^2$. From Theorem \ref{lemma1}, we obtain, as $SNR \to \infty$,
\begin{equation*}
 P_e \leq C.SNR^{-n_tn_r} + o\left(SNR^{-n_tn_r}\right).
\end{equation*}
where 
\begin{equation}\label{coding_gain}
 C =  \left(M ^{n_{min}}-1\right)\frac{a_1(2n_tn_r-1)(2n_tn_r-3)\cdots1}{2n_tn_r}\left( \frac{\delta_M \zeta_{min}^2}{n_{min}E_M}\right)^{-n_tn_r},
\end{equation}
 with $a_1$ being a constant in the expression for the marginal PDF of $\lambda_1$, as defined in Theorem \ref{lemma2}. Therefore, the overall diversity gain of the system is $n_tn_r$. Note that in \eqref{coding_gain}, $\delta_M$ and $\zeta_{min}$ define the coding gain - the higher the value of $\zeta_{min}$ and $\delta_M$, the better the error performance. It is not known if $\zeta_{min}$ can be obtained analytically. The values in Table \ref{table_div} are only indicative of what the actual $\zeta_{min}$ is likely to be. For example, for the $16 \times16$ system with $64$-QAM, $\zeta_{min}$ is likely to be greater than 1. Thus, we have shown that our precoding scheme provides full-diversity. This claim is supported by the WEP plots for different MIMO systems, shown in the following section.

\section{Simulation results}\label{sec_simulations}
For all simulations, we consider the Rayleigh fading channel with prefect CSIT and CSIR. We consider three MIMO systems - $2\times2$, $4\times4$ and $8\times8$ MIMO systems. For the $2\times2$ MIMO system, the rival precoders for our precoder are the E-$d_{min}$ precoder and the $X$-precoder. We have left out the $Y$-precoder since it has been shown in \cite{xy_codes} to have an error performance comparable with that of the $X$-precoder for $4$-QAM, while for $16$-QAM, it is not expected to beat the $X$-precoder, as can be inferred from Fig. \ref{min_dist16}. The constellations employed are $4$-QAM and $16$-QAM. For $16$-QAM, the E-$d_{min}$ precoder is not considered since it is very hard to obtain and not explicitly stated in literature. For the $X$-precoder, we have obtained the approximately optimal angles for $16$-QAM using a numerical search for $\gamma = k.\varDelta$, $k = 1,2,\cdots, \lfloor \frac{\pi}{4\varDelta} \rfloor$, $\varDelta = 0.001$, and have used a look-up table to obtain the appropriate angle for the corresponding value of $\gamma$ during simulations. A look-up table is necessary since the approximately optimal angle for the $X$-precoder is not a weighted sum of shifted step functions like that for our precoder. Fig. \ref{fig_cer2x2} shows the plots of the word error probability (WEP) as a function of the average SNR at each receive antenna for the $2\times2$ system. As expected, the E-$d_{min}$ precoder has the best error performance for $4$-QAM, marginally beating our precoder, which in turn significantly beats the $X$-precoder. For $16$-QAM, our precoder beats the $X$-precoder by about $1.5$dB at an SNR of $30$dB. 

For $4\times4$ and $8\times8$ systems, we also consider the Lattice precoder, which is the orthogonal matrix with the largest known non-zero product distance for $n_{min} = n_t$ real dimensions, and given explicitly in \cite{full_div}. This precoder has been shown in Theorem \ref{lemma3} to offer full-diversity. The plots of the WEP for the $4\times4$ system and the $8\times8$ system are given in Fig. \ref{fig_cer4x4} and Fig. \ref{fig_cer8x8}, respectively. The plots indicate that the E-$d_{min}$ precoder and our proposed precoder offer full-diversity, since they beat the full-diversity achieving Lattice precoder (even the $X$-precoder appears to offer full-diversity, losing out in coding gain only. The explanation for this has already been given in Section \ref{sec_div_gain}). Our precoder significantly outperforms the $X$-precoder while having lower expected ML-decoding complexity (as shown in Theorem \ref{thm2}), while the E-$d_{min}$ precoder has the best error performance for $4$-QAM, marginally beating our precoder, but this is at the expense of ML-decoding complexity. In Table \ref{probabilities}, by simulating $10^6$ channel realizations, we have tabulated the probability that ML-decoding can be done without searching over any of the signal points for $4$- and $16$-QAM. It can be noted that for the $2\times2$ MIMO system with $4$-QAM, for more than $50\%$ of the channel realizations, no search over any of the signal points is required, while for the $4\times4$ and the $8\times8$ MIMO systems, half the number of subsystems do not require any search over the constellation points for more than $99\%$ of the channel realizations. This advantage, however, diminishes with the increase in constellation size. 

\section{Discussion}\label{sec_discussion}
For systems with full CSIT, we have proposed a real-valued precoder for $n_t=2$, which, for QAM constellations, is approximately optimal among all real-valued precoders based on the SVD of the channel matrix and has an expected ML-decoding complexity lower than $\mathcal{O}(\sqrt{M})$. The advantage of the proposed precoder over the E-$d_{min}$ precoder is that it is much easier to obtain for larger QAM constellations and it also has lower ML-decoding complexity, while the loss in error performance for $4$-QAM is only marginal. The proposed precoder handsomely beats the $X$-precoder in error performance while having lower expected ML-decoding complexity. A precoding scheme for $n_t >2$ is also given and this scheme is shown to offer full-diversity with QAM constellations. It would be interesting to design low ML-decoding complexity, full-rate, full-diversity precoders for more realistic scenarios, like for systems with imperfect CSIT or partial CSIT. 

\section*{Acknowledgements}
This work was partly supported by the DRDO-IISc program on Advanced Research in Mathematical Engineering, through research grants, and the INAE Chair Professorship to B. Sundar Rajan.

\newpage

\begin{table}
\begin{center}
\begin{tabular}{|c|c|c|c|c|} \hline
 $M$ & $\gamma$  &  $\theta^*$ & $\tan\gamma\tan\psi^*$ &  $(p,q)$  \\ \hline \hline
 \multirow{2}{*}{$4$} & $0 - \tan^{-1}\left( \frac{1}{\sqrt{7}}\right)$ & $\tan^{-1}\left( \frac{1}{2}\right)$ &  $0$ & $(0,1)$, $(1,1)$ \\ \cline{2-5}
& $\tan^{-1}\left( \frac{1}{\sqrt{7}}\right) - \frac{\pi}{4}$ & $\frac{\pi}{4}$ &  $\frac{1}{\sqrt{3}}$ & $(0,1)$, $(1,1)$, $(1,0)$\\ \hline

\multirow{4}{*}{$16$}  & $0 - 0.1018 $ & $\tan^{-1}\left( \frac{1}{4}\right)$ &  $0$ & $(0,1)$, $(1,3)$\\ \cline{2-5}
&$0.1018 - 0.1567$ & $0.3474$ & $0.1096$ & $(0,1)$, $(1,3)$, $(1,2)$ \\ \cline{2-5}
&$0.1567 - 0.3479$ & $0.4914$ &  $0.2277$ & $(0,1)$, $(1,1)$, $(1,2)$ \\ \cline{2-5}
& $0.3479 - \frac{\pi}{4}$ & $\frac{\pi}{4}$ &  $\frac{1}{\sqrt{3}}$ & $(0,1)$, $(1,1)$, $(1,0)$ \\ \hline

\multirow{7}{*}{64} & $0 - 0.0273 $ & $\tan^{-1}\left( \frac{1}{8}\right)$ &  $0$ & $(0,1)$, $(1,7)$ \\ \cline{2-5}
& $0.0273 - 0.0354$ & $0.5450$ &  $0.0335$ & $(1,2)$, $(2,3)$, $(3,5)$  \\ \cline{2-5}
& $0.0354 - 0.0415$ & $0.3766$ &  $0.0393$ & $(1,2)$, $(1,3)$, $(2,5)$ \\ \cline{2-5}
& $0.0415 - 0.0519$ & $0.6325$ &  $0.0433$ & $(1,1)$, $(2,3)$, $(3,4)$ \\ \cline{2-5}
& $0.0519 - 0.0735$ & $0.2640$ &  $0.0620$ & $(0,1)$, $(1,3)$, $(1,4)$  \\ \cline{2-5}
& $0.0735 - 0.0975$ & $0.5763$ &  $0.0872$ & $(1,1)$, $(1,2)$, $(2,3)$\\ \cline{2-5}
& $0.0975 - 0.1567$ & $0.3474$ &  $0.1096$ & $(0,1)$, $(1,3)$, $(1,2)$\\ \cline{2-5}
& $0.1567 - \frac{\pi}{4}$ & same as $16$-QAM &  same as $16$-QAM &  same as $16$-QAM\\ \hline

\multirow{5}{*}{256} & $0 - 0.0071$ & $\tan^{-1}\left( \frac{1}{16}\right)$ &  $0$ & $(0,1)$, $(1,15)$ \\ \cline{2-5}
& $0.0071 - 0.0139$ & $0.5103$ &  $0.0098$ & $(1,2)$, $(4,7)$, $(5,9)$  \\ \cline{2-5}
& $0.0139 - 0.0278$ & $0.1501$ &  $0.0197$ & $(0,1)$, $(1,6)$, $(1,7)$\\ \cline{2-5}
& $0.0278- 0.0494$ & $0.2114$ &  $0.0394$ & $(0,1)$, $(1,4)$, $(1,5)$ \\ \cline{2-5}
& $0.0494 - 0.0735$ & $0.2640$ &  $0.0620$ & $(0,1)$, $(1,3)$, $(1,4)$ \\ \cline{2-5}
&$0.0735-\frac{\pi}{4}$ & same as $64$-QAM &  same as $64$-QAM &  same as $64$-QAM \\ \hline
\multirow{6}{*}{1024} & $0 - 0.0018 $ & $\tan^{-1}\left( \frac{1}{32}\right)$  & $0$ & $(0,1)$, $(1,31)$\\ \cline{2-5}
& $0.0018 - 0.0027 $ & $0.1301$ &  $0.0022$ & $(1,8)$, $(2,15)$, $(3,23)$ \\ \cline{2-5}
& $0.0027 - 0.0042$ & $0.2300$ &  $0.0035$ & $(1,4)$, $(3,13)$, $(4,17)$ \\ \cline{2-5}
& $0.0042 - 0.0065$ & $0.7304$ &  $0.0053$  & $(1,1)$, $(8,9)$, $(9,10)$ \\ \cline{2-5}
& $0.0065 - 0.0086$ & $0.3509$ &  $0.0079$ & $(1,3)$, $(3,8)$, $(4,11)$ \\ \cline{2-5}
& $0.0086 - \frac{\pi}{4}$ & same as $256$-QAM &  same as $256$-QAM  &  same as $256$-QAM \\ \hline
\end{tabular}
\end{center}
\caption{Approximately optimal values of $\theta$ and $\psi$ for various QAM constellations}
\label{tableqam}
\end{table}

\newpage

\begin{table}
\centering
\begin{threeparttable}
\begin{tabular}{|c|c|c|c|} \hline
 \multirow{3}{*}{Precoder} &  ML-decoding  & \multirow{3}{*}{ Existence for $\vert \mathcal{A} \vert = M$} & \multirow{3}{*}{{ Error performance}} \\ 
&  complexity\tnote{{\bf \ddag}} &   & \\
&   for $\vert \mathcal{A}\vert=M$ &   &  \\ \hline \hline
\multirow{3}{*}{E-$d_{min}$ precoder} & \multirow{3}{*}{$\mathcal{O}\left(M\sqrt{M}\right)$} & \multirow{3}{*}{exists only for $4$-QAM}  & the best   \\ 
& &  & for $M=4$ among\\ 
& & & known precoders\\ \hline
\multirow{3}{*}{$X$-precoder} & \multirow{3}{*}{$\mathcal{O}\left(\sqrt{M}\right)$} &  not possible without &  worse than the E-$d_{min}$  \\ 
& &the use of a look-up\tnote{{ \bf \textyen}} & precoder and the \\
& & table for $M >4$& proposed precoder\\ \hline
\multirow{3}{*}{$Y$-precoder} & \multirow{3}{*}{$\mathcal{O}\left(1\right)$} & closed form expression  & better than $X$-precoders\\ 
& & exists for any $M$, with $\mathcal{A}$  & for ill-conditioned channels  \\
& & a 2-dimensional constellation \cite{xy_codes} & only \\ \hline
The proposed  & \multirow{2}{*}{$\mathcal{O}\left(\sqrt{M}\right)$} & easy to obtain  &  much better than \\
precoder & &for any $M$-QAM & $X$-,$Y$-precoders\\ \hline
\end{tabular}
\begin{tablenotes}
\item[{ \bf \ddag}] when $\gamma_1^\prime \leq \gamma \leq \gamma_2^\prime$, for the E-$d_{min}$ precoder, a full search over all the signal points is needed, while for
 the proposed precoder, no search over signal points is needed.
\item[{ \bf \textyen}] amounts to storing the near-optimal angle values for $\gamma = k\varDelta$, $k = 1,2,\cdots, \lfloor \frac{\pi}{4\varDelta} \rfloor$, where $\varDelta$ is a suitable step size. 
\end{tablenotes}
\end{threeparttable}
\caption{Comparison of low ML-decoding complexity Precoding schemes}
\label{comp_table}
\end{table}

{\footnotesize
\begin{table}
\centering
\begin{threeparttable}
\begin{tabular}{|c|c|c|c|c|c|c|c|c|c|c|} \hline
 MIMO  & \multicolumn{2}{c|}{$M=4$} & \multicolumn{2}{c|}{$M=16$} & \multicolumn{2}{c|}{$M=64$} & \multicolumn{2}{c|}{$M=256$} & \multicolumn{2}{c|}{$M=1024$}\\ \cline{2-11}
system & $\zeta_{min}$ & $P_\zeta$ & $\zeta_{min}$ & $P_\zeta$ & $\zeta_{min}$ & $P_\zeta$ & $\zeta_{min}$ & $P_\zeta$  & $\zeta_{min}$ & $P_\zeta$ \\ \hline \hline
$4\times4$ & $0.11$ & $0.79$ &$0.33$ &$0.10$ &$0.45$ & $0.09$& $0.40$ & $0.09$ & $0.45$ & $0.09$\\ \hline
$8\times8$ & $0.28$ & $0.99$ & $0.59$ &$ 0.01$ & $0.76$ & $10^{-3}$ & $0.77$ & $1.1\times10^{-3}$ & $0.75$ & $10^{-3}$ \\ \hline
$16\times16$ &$0.35$ & $1$& $0.72$& $9.1\times10^{-4}$ & $1.16$& $0$& $1.17$ &$0$ & $1.12$ & $0$\\ \hline
$32\times32$ &$0.41$ &$1$ &$0.84$ & $4\times10^{-3}$ &$1.87$ &$0$ &$1.75$ & $0$ & $1.9$ &  $0$ \\ \hline
\end{tabular}
\end{threeparttable}
\caption{Characteristics of $\zeta$ for different MIMO systems}
\label{table_div}
\end{table}
}

{\footnotesize
\begin{table}
\centering
\begin{threeparttable}
\begin{tabular}{|c|c|c|c|} \cline{1-1} \cline{3-4}
MIMO system & & $M=4$ & $M=16$ \\ 
  \hline \hline
$2\times2$ & $Pr\{\frac{\sigma_2}{\sigma_1} \leq \tan\gamma_2^\prime \}$\tnote{{ \bf \dag}} & $0.5780$ & $0.0612$ \\ \hline
\multirow{2}{*}{$4\times4$} & $Pr\{\frac{\sigma_4}{\sigma_1} \leq \tan\gamma_2^\prime \}$ & $0.9942$ & $0.3286$ \\ \cline{2-4}
& $Pr\{\frac{\sigma_3}{\sigma_2} \leq \tan\gamma_2^\prime \}$ & $0.0620$ & $8\times10^{-6}$ \\ \hline
\multirow{4}{*}{$8\times8$} & $Pr\{\frac{\sigma_8}{\sigma_1} \leq \tan\gamma_2^\prime \}$ & $1$ &$0.8582$ \\ \cline{2-4}
& $Pr\{\frac{\sigma_7}{\sigma_2} \leq \tan\gamma_2^\prime \}$ & $0.9969$ & $0.0112$ \\ \cline{2-4}
& $Pr\{\frac{\sigma_6}{\sigma_3} \leq \tan\gamma_2^\prime \}$ & $0.2179$ & $0$ \\ \cline{2-4}
& $Pr\{\frac{\sigma_5}{\sigma_4} \leq \tan\gamma_2^\prime \}$ & $4\times10^{-6}$ & $0$ \\ \hline
\end{tabular}
\begin{tablenotes}
\item[{ \bf \dag}] $\gamma_2^\prime = \tan^{-1}\left(\frac{1}{\sqrt{7}}\right)$ for $M=4$ and $\gamma_2^\prime \approx 0.1018$ for $M =16$.
\end{tablenotes}
\end{threeparttable}
\caption{Probability that no search is required for each subsystem of different MIMO systems}
\label{probabilities}
\end{table}
}

\begin{figure}
\centering
\includegraphics[width=5.5in,height=3.5in]{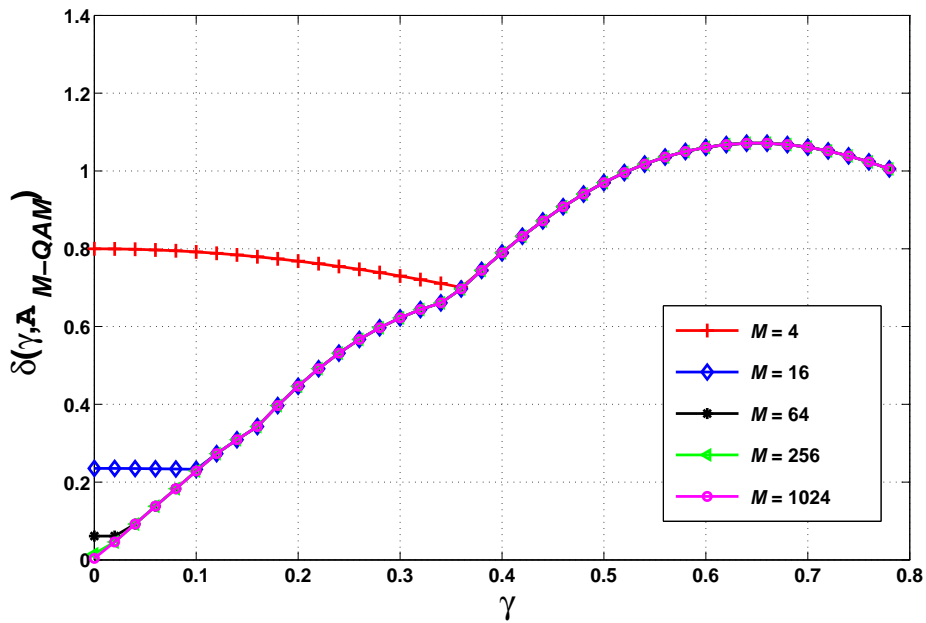}
\caption{$\delta(\gamma,\mathcal{A}_{M-QAM})$ as a function of $\gamma$ for the proposed precoder for various QAM constellations}
\label{min_dist_plots}
\end{figure}

\begin{figure}
\centering
\includegraphics[width=5.5in,height=3.5in]{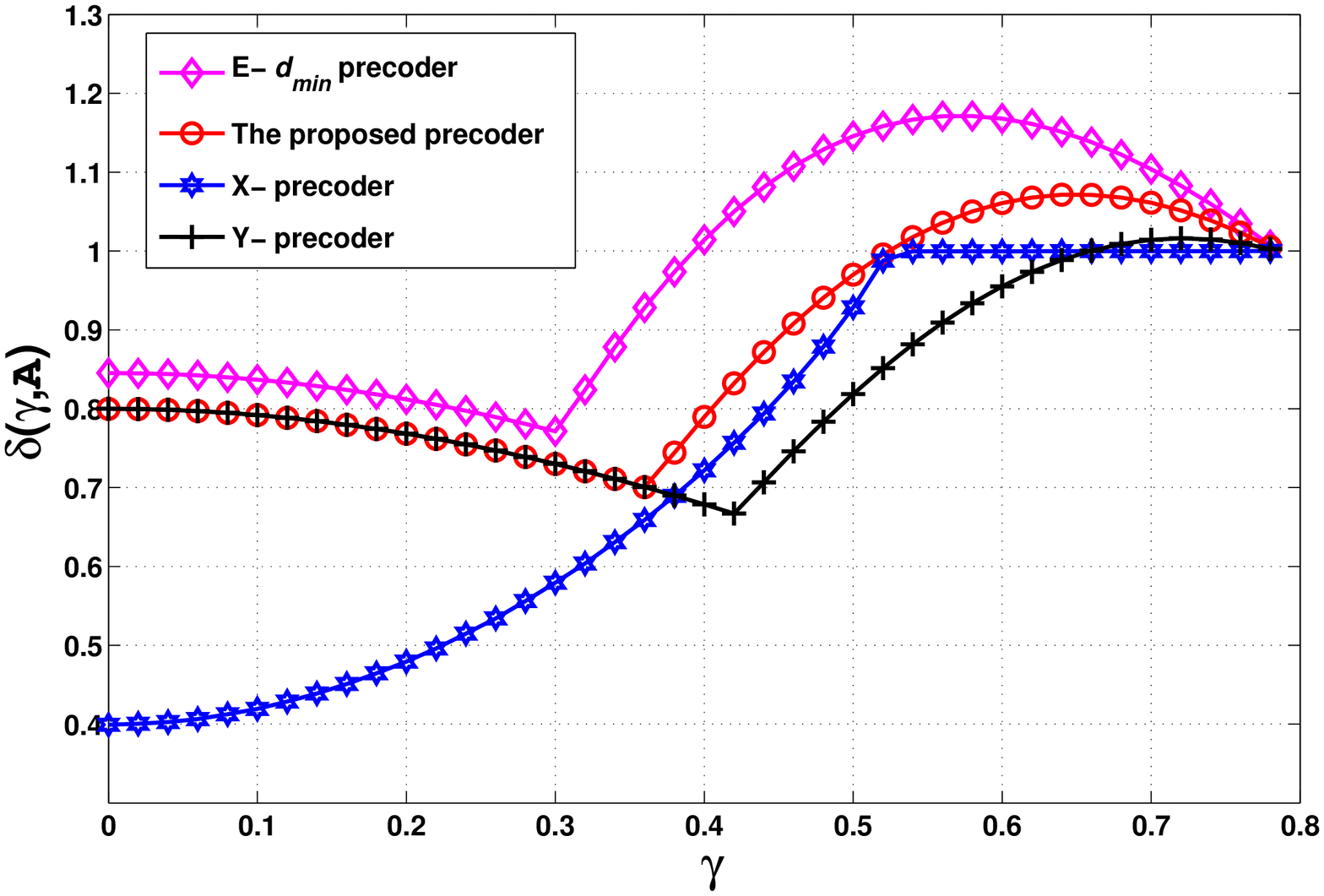}
\caption{$\delta(\gamma,\mathcal{A})$ comparison for $\vert \mathcal{A}\vert = 4$}
\label{min_dist4}
\end{figure}

\begin{figure}
\centering
\includegraphics[width=5.5in,height=3.5in]{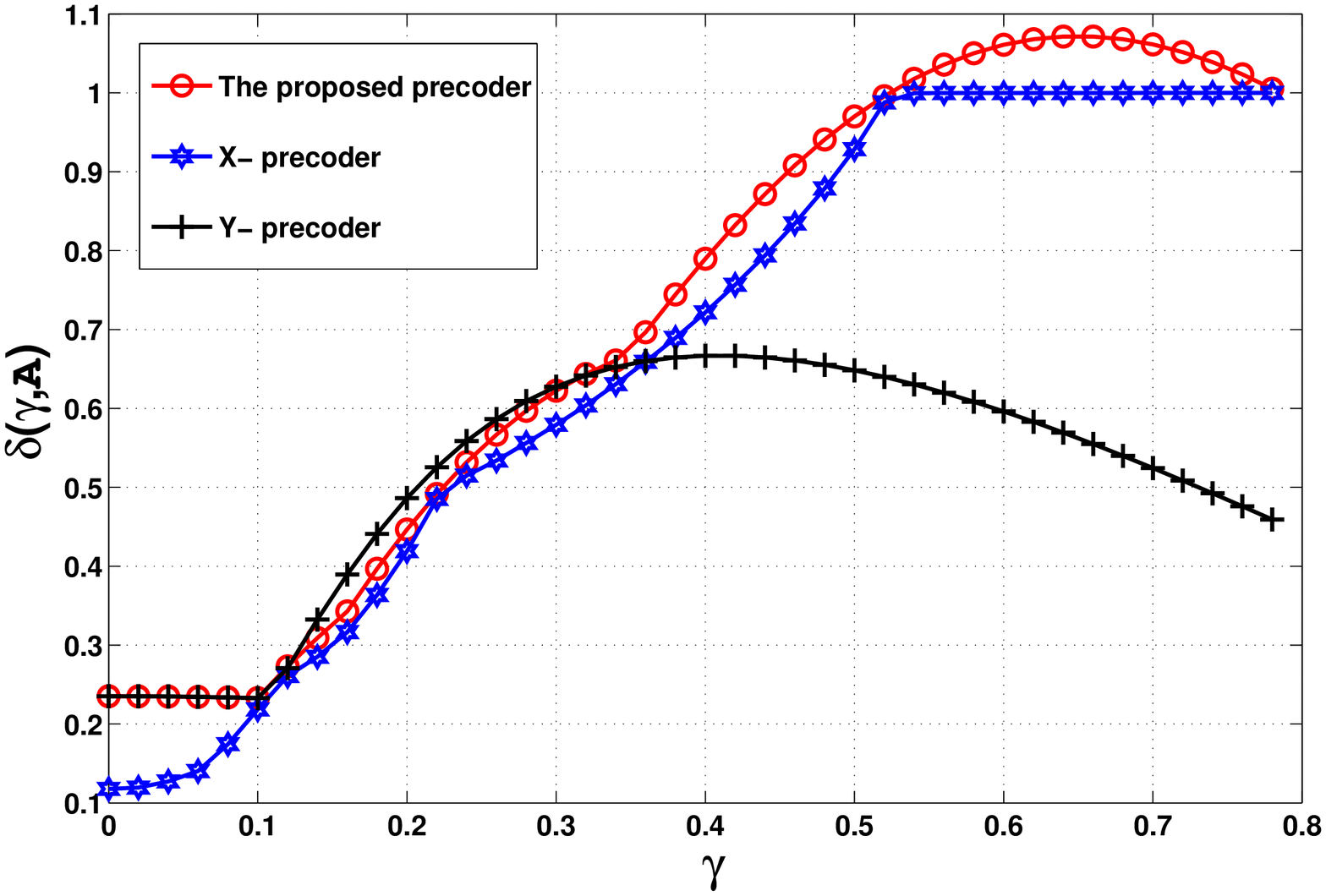}
\caption{$\delta(\gamma,\mathcal{A})$ comparison for $\vert \mathcal{A}\vert = 16$}
\label{min_dist16}
\end{figure}

\begin{figure}
\centering
\includegraphics[width=5.5in,height=3.5in]{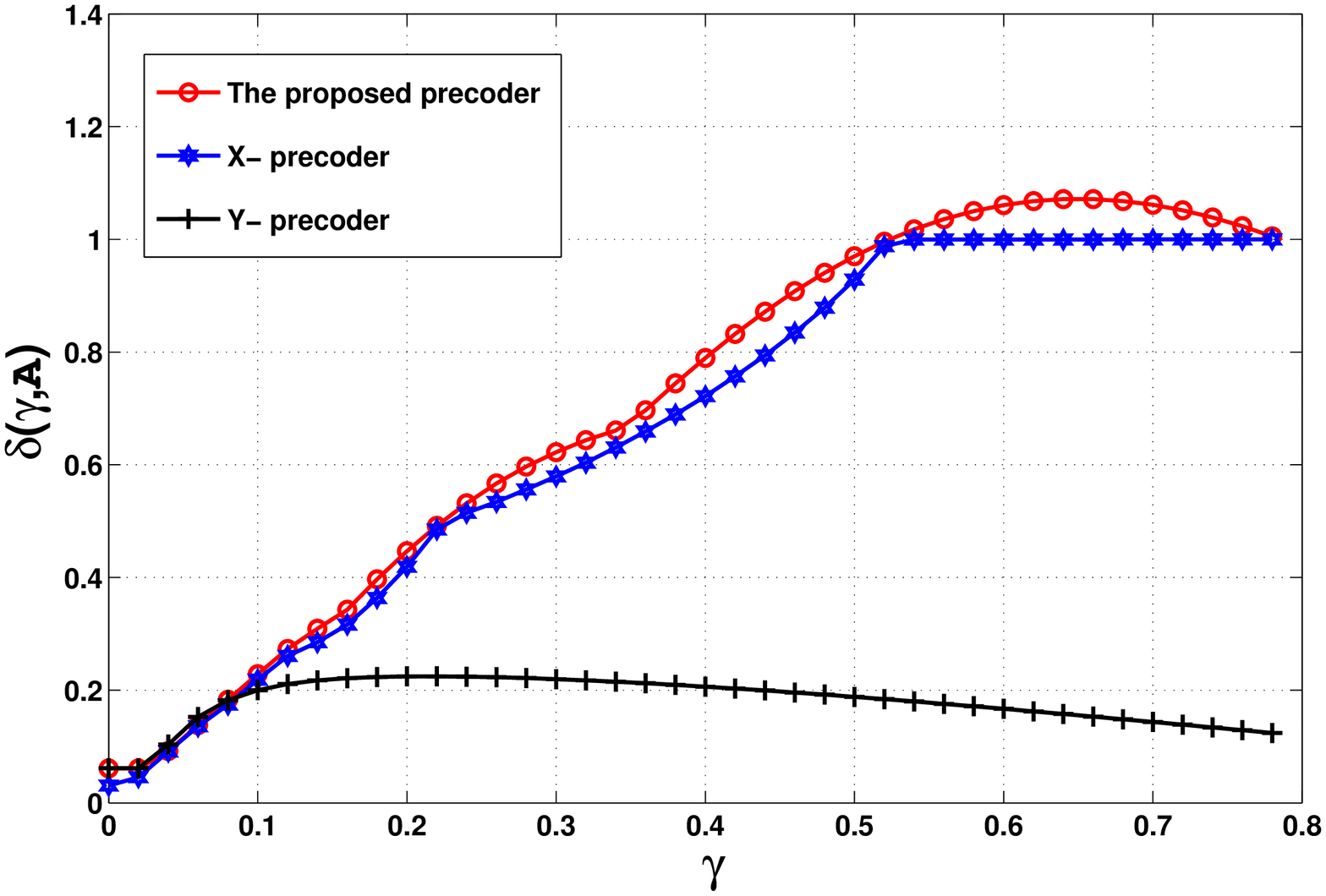}
\caption{$\delta(\gamma,\mathcal{A})$ comparison for $\vert \mathcal{A}\vert = 64$}
\label{min_dist64}
\end{figure}

\begin{figure}
\centering
\includegraphics[width=5.5in,height=3.5in]{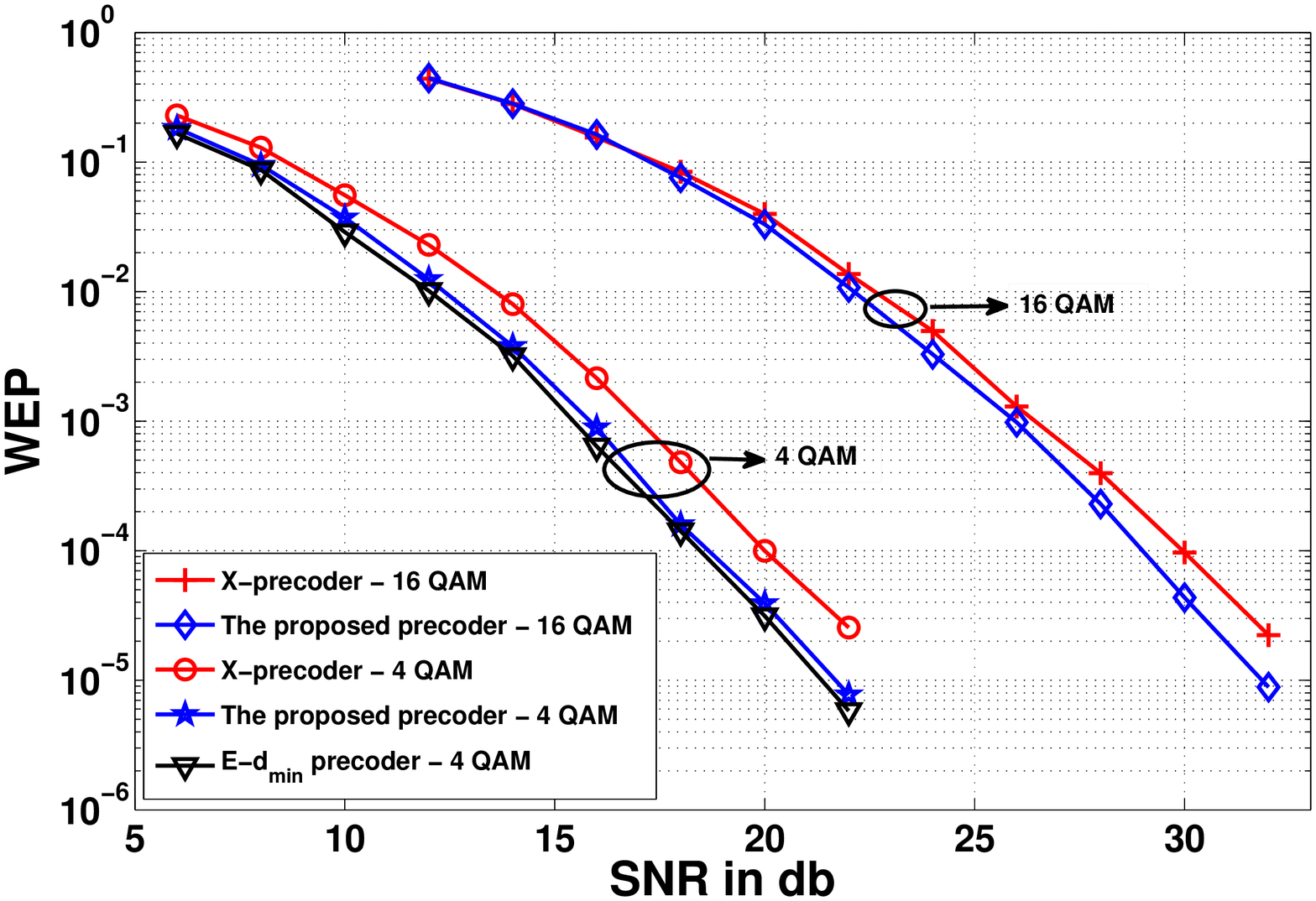}
\caption{WEP comparison for $2\times 2$ MIMO systems for $4$-QAM and $16$-QAM}
\label{fig_cer2x2}
\end{figure}

\begin{figure}
\centering
\includegraphics[width=5.5in,height=3.5in]{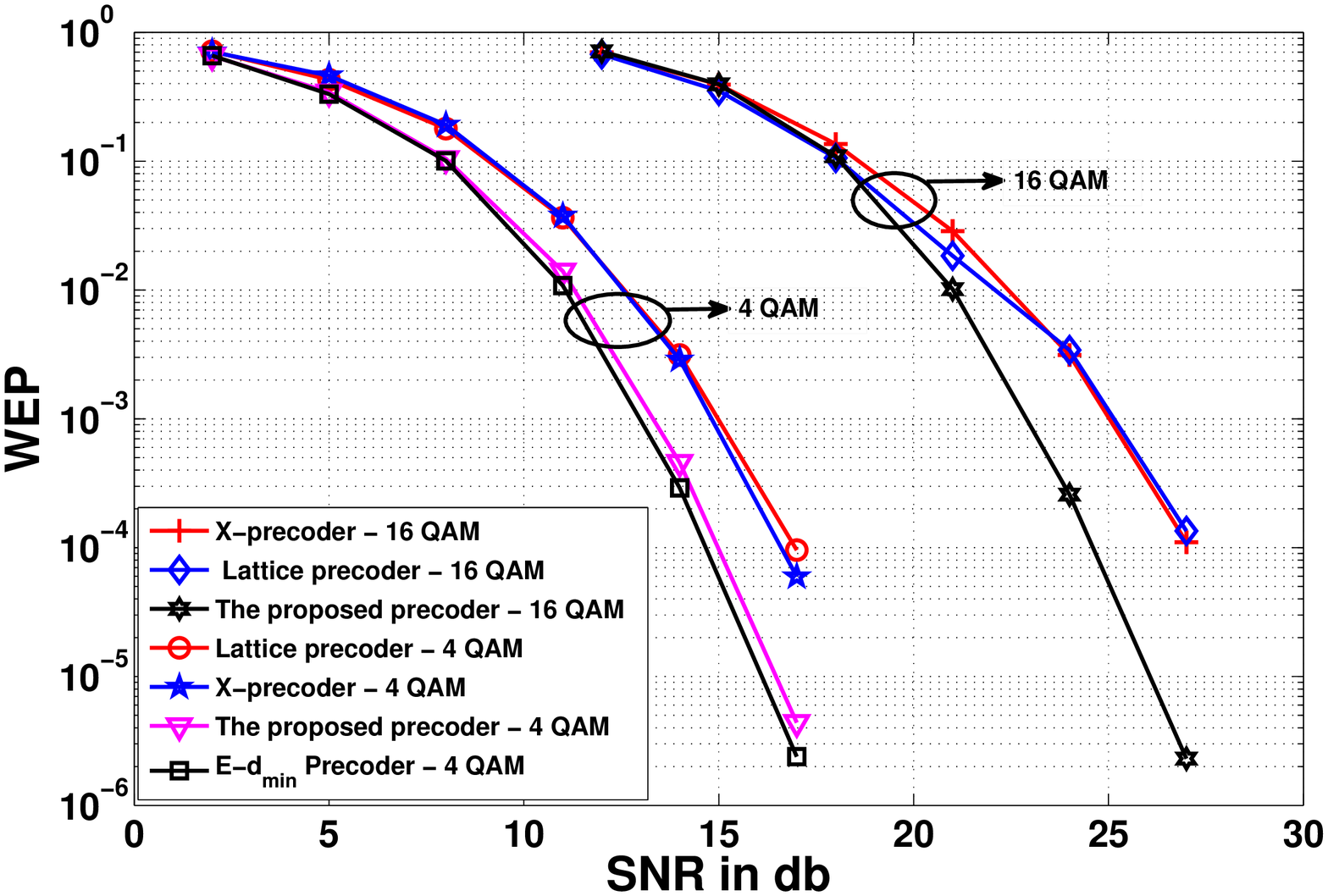}
\caption{WEP comparison for $4\times 4$ MIMO systems for $4$-QAM and $16$-QAM}
\label{fig_cer4x4}
\end{figure}

\begin{figure}
\centering
\includegraphics[width=5.5in,height=3.5in]{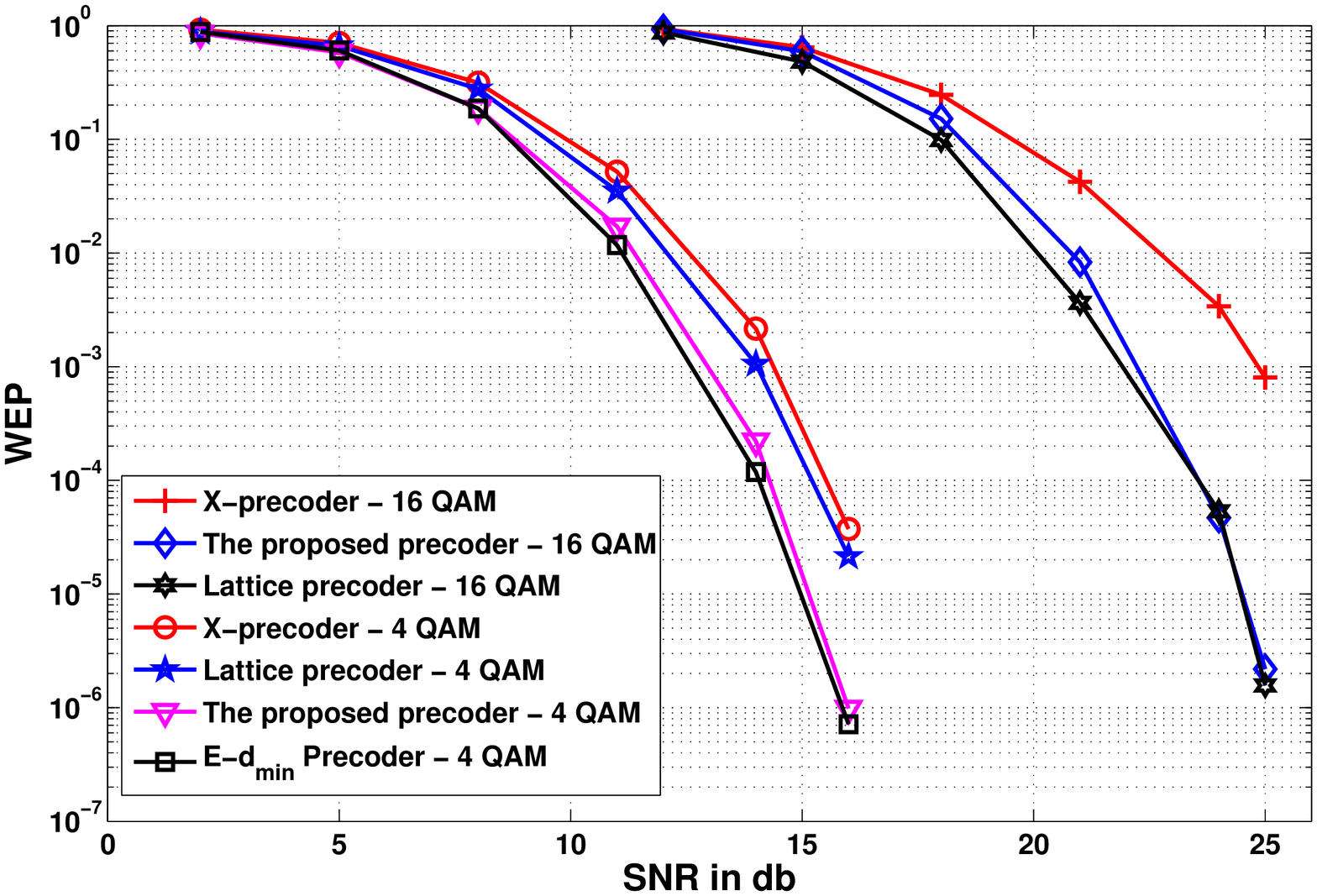}
\caption{WEP comparison for $8\times 8$ MIMO systems for $4$-QAM and $16$-QAM}
\label{fig_cer8x8}
\end{figure}

\end{document}